\newtheorem{theorem}{Theorem}
\newtheorem{lemma}{Lemma}
\newcommand{\dhat}[1]{ \widehat{\widehat{#1}} }
\title{Structural Break Detection in High-Dimensional Non-Stationary VAR models}
\begin{document}

\maketitle
%\date{\today}
%Call number: 23551

%\textbf{Authors:}
\centerline{Abolfazl Safikhani \footnote[1]{as5012@columbia.edu} and Ali Shojaie \footnote[2]{ashojaie@uw.edu}}

\centerline{Columbia University and University of Washington }

\vspace{1cm}

\textbf{Abstract}
Assuming stationarity is unrealistic in many time series applications. A more realistic alternative is to allow for piecewise stationarity, where the model is allowed to change at given time points. In this article, the problem of detecting the change points in a high-dimensional piecewise vector autoregressive model (VAR) is considered. Reformulated the problem as a high-dimensional variable selection, a penalized least square estimation using total variation LASSO penalty is proposed for estimation of model parameters. It is shown that the developed method over-estimates the number of change points. A backward selection criterion is thus proposed in conjunction with the penalized least square estimator to tackle this issue. We prove that the proposed two-stage procedure consistently detects the number of change points and their locations. A block coordinate descent algorithm is developed for efficient computation of model parameters. The performance of the method is illustrated using several simulation scenarios.

\textbf{Keywords:} High-dimensional time series; Structural break; LASSO; Piecewise stationary.

\section{Introduction}\label{sec:intro}
Emerging applications in biology \citep{michailidis_2013autoregressive, smith2012future, fujita2007modeling, mukhopadhyay2006causality} and finance \citep{de_2008, fan_2011sparse} have sparked an interest in methods for analyzing high-dimensional time series. Recent work includes new regularized estimation procedures for vector autoregressive (VAR) models \citep{Basu_2015, matteson_2017}, high-dimensional generalized linear models \citep{hall_2016} and high-dimensional point processes \citep{HansenETAL_2015,  ChenWittenShojaie_2017}. These methods generalize the earlier work on methods for high-dimensional longitudinal data \citep{ShojaieMichailidis_2010, ShojaieBasuMichailidis_2012}, and handle the theoretical challenges of resulting from the temporal dependence among observations. 
Related methods have also focused on joint estimation of multiple time series \citep{QiuETAL_2016}, estimation of (inverse) covariance matrices \citep{xiao2012covariance, chen2013covariance, tank2015bayesian}, and estimation of high-dimensional systems of differential equations \citep{LuETAL_2011, ChenShojaieWitten_2016}. 

Despite considerable progress, both on computational and theoretical fronts, the vast majority of existing work on high-dimensional time series assumes that the underlying process is \emph{stationary}. However, multivariate time series observed in many modern applications are nonstationary. For instance, \citet{ClaridaGaliGertler_2000} show that the effect of inflation on interest rates varies across Federal Reserve regimes. Similarly, as pointed out by \citet{OmbaoVonSachsGuo_2005}, electroencephalograms (EEGs) recorded during an epileptic seizure display amplitudes and spectral distribution that vary over time. This nonstationarity is illustrated in Figure~\ref{fig_EEG_full}, which shows the EEG signals recorded at 18 EEG channels during an epileptic seizure from a patient diagnosed with left temporal lobe epilepsy \citep{OmbaoVonSachsGuo_2005}. 
The sampling rate in this data is 100~Hz and the total number of time points per EEG is $ T =32,768 $ over $\sim238$ seconds. Based on the neurologist's estimate, the seizure took place at $t = 185~s$. The plot of the EEGs also suggests that the magnitude and the variability of these signals change  simultaneously around that time. Assuming stationarity when analyzing such high-dimensional times series can severely bias estimation and inference procedures.

\begin{figure}[t]
\begin{center}
\includegraphics[width=0.6\linewidth]{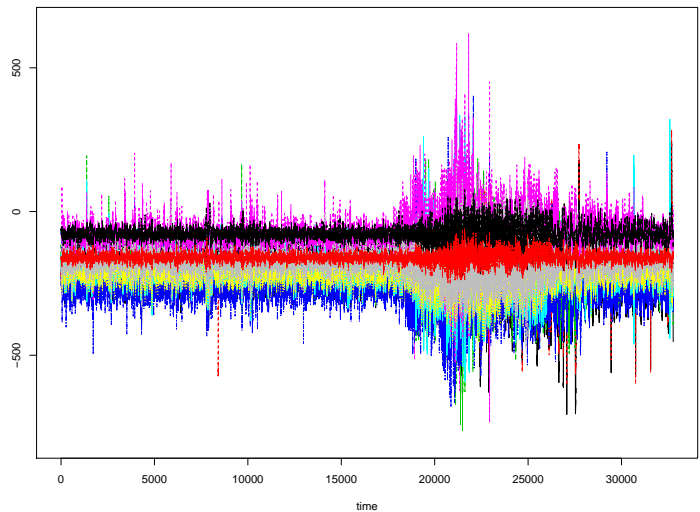}
\vspace{-0.5cm}
\caption{EEG signals from a patient diagnosed with left temporal lobe epilepsy. The data was recorded at 18 locations on the scalp during an epileptic seizure over 32,768 time points.}\label{fig_EEG_full}
\end{center}
\end{figure} 

%Methods for analyzing nonstationary time series includes those based on \emph{local stationarity} [REF] and methods based on \emph{piecewise stationarity}. Local stationarity assumes that \textcolor{red}{explain local stationarity}. While the majority of methods based on local stationarity concern univariate [REF] or low-dimensional [REF] time series, regularization approaches have recently been used to analyze high-dimensional locally-stationary time series data [REF]. 

Non-stationary VAR models have been primarily studied in univariate or low-dimensional settings. Existing approaches include models that fully parameterize the evolution of the transition matrices of time-varying VARs, or enforce a Bayesian prior on the structure of the time-dependence \citep{Primiceri_2005}. An alternative approach is to assume that the VAR process is \emph{locally stationary}; locally stationarity means that, in each small time interval, the process is well-approximated by a stationary one. This notion has been studied in low-dimensions by \citet{Dahlhaus_2012}; \citet{SatoMorettinETAL_2007} proposed a wavelet-based method for estimating the time-varying coefficients of the VAR model. 

Recently, \citet{DingQiuChen_2016} considered estimation of high-dimensional time-varying VARs by solving time-varying Yule-Walker equations based on kernelized estimates of variance and auto-covariance matrices. This approach is a significant step forward, and facilitates estimation of nonstationary VAR models in high dimensions. However, local stationarity may not be a suitable assumption in many applications. For instance, when analyzing EEG data from patients who suffer from epileptic seizure, it is expected that interactions among brain regions change before and after the occurrence of seizure. Assuming that the process can be locally approximated by a stationary one at the time of seizure may be unrealistic. A more natural assumption in such settings is that the process is \emph{piecewise stationary} --- that the process is stationary in each of (potentially many) regions, e.g., before and after seizure. 

Existing methods for analyzing piecewise stationary time series have primarily focused on univariate time series. For instance, \citet{Davis_2006}, \citet{Chan_2014} and \citet{Bai_1997} propose different approaches for identifying structural breakpoints at which the behavior of a univariate time series changes. 
By identifying structural breaks in mean and/or covariance structures over time, these approaches provide more flexible than those assuming stationarity. However, their extension to multivariate and high-dimensional VARs have not been explored. The only exception is the SLEX method of \citet{OmbaoVonSachsGuo_2005}, who analyzed the data from Figure~\ref{fig_EEG_full} and identified break points associated with seizure using a wavelet-based approach. However, to deal with the large number of time series, \citet{OmbaoVonSachsGuo_2005} apply a dimension reduction step. Thus, their method does not reveal mechanisms of interactions among brain regions, which is a key interest in understanding changes in brain function before, during and after seizure. In this paper we bridge this gap by developing a regularized estimation procedure for high-dimensional piecewise stationary VARs with possibly many break points. The proposed approach first identifies the number of break points. It then determines the location of the break points and provides consistent estimates of model parameters. Simulated and real data examples are used to support the theoretical findings of the paper, and illustrate the flexibility of the proposed approach in applications. 

The rest of this paper is organized as follows. In Section~\ref{sec:model}, we describe the piecewise stationary model and the key assumptions. We also present our estimation framework for detecting structural breaks in piecewise stationary VARs. The asymptotic properties of the proposed method are discussed in Section~\ref{sec:asymptotic}. In particular, we show that under reasonable assumptions the structural breaks in high-dimensional VAR models are consistency estimated. To this end, we first establishing the prediction consistency of the proposed method in Section~\ref{sec:predconsistency}. Results of simulation experiments are presented in Sections\ref{sec:sims}. In Section~\ref{sec:data} we illustrate the utility of the proposed method by applying it to identify structural break points in two multivariate time series. We conclude the paper with a discussion in Section~\ref{sec:disc}. Technical lemmas and proofs are collected in the Appendix.

%%%%%%%%%%%%%%%%%
\section{Model and Method}\label{sec:model}
A piecewise stationary VAR model can be viewed as a collection of separate VAR models concatenated at multiple break points over the time period of the observed time series. More specifically, suppose there exist $ m_0 $ break points $ 0 = t_0 < t_1 < \cdots < t_{m_0} < t_{m_0 + 1} = T + 1 $ such that 
\begin{equation}
y_t = \sum_{i=1}^{d} \Phi^{(i,j)} y_{t-i} + \varepsilon_t, \hspace{2cm} t_{j-1} \leq t < t_j, \hspace{1cm} j = 1, 2, ..., m_0 + 1,
\end{equation}
where $ y_t $ is a $ p \times 1 $ vector of observed time series at time $ t $, $ \Phi^{(i,j)}$'s are $ p \times p $ spares coefficient matrices of the VAR process, $ \varepsilon_t $ is a multivariate Gaussian white noise with covariance matrix $ \Sigma_\varepsilon $. 

Our  goal is to detect the break points $ t_j$'s together with estimates of the coefficient parameters $ \Phi^{(i,j)}$'s in the high-dimensional case where $ p \gg T $. To this end, we adopt the idea of change-point detection in \cite{Harchaoui_2010} and \cite{Chan_2014}, and extend it to the multivariate, high-dimensional setting. Specifically, our estimation procedure utilizes the following linear regression representation of the VAR process
\begin{equation}\label{regression}
\begin{pmatrix} y_d^\prime \\  y_{d+1}^\prime \\ \vdots  \\ y_T^\prime \end{pmatrix} = \begin{pmatrix} y_{d-1}^\prime & \ldots & y_0^\prime &  & 0 & & \ldots &  & 0 & \\  y_{d}^\prime & \ldots & y_1^\prime & y_{d}^\prime & \ldots & y_1^\prime  & & & 0 &\\  & \vdots & & & & & \ddots & &   \\ y_{T-1}^\prime & \ldots & y_{T-d}^\prime & y_{T-1}^\prime & \ldots & y_{T-d}^\prime & \ldots &  y_{T-1}^\prime & \ldots &  y_{T-d}^\prime \end{pmatrix} \begin{pmatrix} \theta_1^\prime \\  \theta_{2}^\prime \\ \vdots  \\ \theta_n^\prime \end{pmatrix} + \begin{pmatrix} \varepsilon_d^\prime \\  \varepsilon_{d+1}^\prime \\ \vdots  \\  \varepsilon_T^\prime \end{pmatrix},
\end{equation}
where $ n = T - d + 1 $, $ \Phi^{(.,j)} =  \begin{pmatrix} \Phi^{(1,j)} & \ldots & \Phi^{(d,j)} \end{pmatrix} \in \mathbb{R}^{p \times pd} $, $  \theta_1 = \Phi^{(.,1)} $ and 
\begin{equation}
\theta_i =  \left\{
	\begin{array}{ll}
		\Phi^{(.,i+1)} - \Phi^{(.,i)}, & \mbox{when } i = t_j  \,\, \mbox{for some} \, j  \\
		0, & \mbox{otherwise},
	\end{array}
\right.
\end{equation}
for $ i = 2, 3, ..., n $. 

Equation~\ref{regression} can be written in a compact form as 
$$ \mathcal{Y} = \mathcal{X} \theta (n) + \varepsilon (n), $$ 
or, in a vector form, as 
$$ Y = Z \Theta + E, $$ 
where $ Y = \mbox{vec}(\mathcal{Y}) $, $ Z = I_p \otimes \mathcal{X} $, and $ E = \mbox{vec}(\varepsilon(n)) $. Denoting $ q = n p^2 d $, $ Y \in \mathbb{R}^{np \times 1} $, $ Z \in \mathbb{R}^{np \times q} $, $ \Theta \in \mathbb{R}^{q \times 1} $, and $ E \in \mathbb{R}^{np \times 1} $. 
%In this representation, by estimating the parameters $ \Theta $, the coefficients of VAR models are estimated together with the break points, $ t_j, j=1, \ldots, m_0$.
Note that in this parameterization, $ \widehat{\theta}_i \neq 0 $, $ i \geq 2 $ implies  a change in the VAR coefficients. Therefore, the structural break points $ t_j, j = 1, \ldots, m_0$ can be estimated as time points $ i \geq 2 $, where $ \widehat{\theta}_i \neq 0 $. 
To this end, the first step of our procedure consists of estimating the parameters $ \Theta $ using an $\ell_1$ penalized least squares regression. Formally, 
\begin{equation}\label{eq_estimation}
\widehat{\Theta} = \mbox{argmin}_{\Theta} \frac{1}{n} \| Y - Z \Theta \|_2^2 + \lambda_n \sum_{i=1}^{n} \| \theta_i \|_1 .
\end{equation}

The optimization problem in \eqref{eq_estimation} is convex and can be efficiently solved using a  block coordinate descent algorithm \citep{tseng2009coordinate}. This algorithm involves updating one of the $ \theta_i$'s at each iteration, until convergence. The KKT conditions of problem \eqref{eq_estimation}, presented in Lemma~\ref{lemma_KKT} of Appendix~A show that for fixed $ i = 1, 2, \ldots, n $, each update of $ \theta_i$ at iteration $ h+1 $  can be calculated as 
\begin{equation}\label{eq_bcd}
\theta_i^{\prime} (h+1) = { \left( \sum_{l=i}^{n} Y_{l-1} Y_{l-1}^\prime \right) }^{-1} S \left( \sum_{l=i}^{n} Y_{l-1} y_{l} - \sum_{j \ne i} \left( \sum_{l = \max(i,j)}^{n} Y_{l-1} Y_{l-1}^\prime \right) \theta_j^\prime (h) ; \lambda \right).
\end{equation}
Here, $ S(. ; \lambda) $ is the element-wise soft-thresholding function on all the components of the input matrix, which maps its input $ x $ to $ x - \lambda $ when $ x > \lambda $, $ x + \lambda $ when $ x < - \lambda $, and $ 0 $ when $ |x| \leq \lambda $. The iteration stops when $ \| \theta (h+1) - \theta (h) \|_\infty < \mbox{tolerance} $; we set $ \mbox{tolerance} = 10^{-3} $. Note that in this algorithm, the whole block of $ \theta_i$ with $ p^2 d $ elements is updated at once which reduces the computation time dramatically. Also, in each update of $ \theta_i$, the previous updated values of other blocks, i. e., other $ \theta_j$'s with $ j \ne i $ are used to speed up the convergence. 
%The convergence of the algorithm is guaranteed by the convexity of the objective function \textcolor{red}{is this true? need a REF}. \textcolor{blue}{I am not 100\% sure about it, but these two references mention sth about the convergence and convergence rate for block coordinate descent, \cite{qin2013efficient}, and \cite{tseng2009coordinate}.  }
%\textcolor{green}{I think the result of Tseng does apply in this case, but please double check.}

\subsection{Refining the Initial Estimate}

Despite its convenience and computational efficiency, estimates from  \eqref{eq_estimation} do not correctly identify the structural break points in the piecewise VAR process. In particular, our theoretical analysis in the next section shows that the number of estimated break points from \eqref{eq_estimation}, i.e., the number of nonzero $ \widehat{\theta}_i \neq 0 $, $ i \geq 2 $, over--estimates the true number of break points. This is because the design matrix $ \mathcal{X} $ may not satisfy the restricted eigenvalue condition \citep{BickelETAL_2009} necessary for establishing consistent estimation of parameters. Instead, in the next section we first establish prediction consistency of the model from \eqref{eq_estimation}. We then show that consistent break point detection may be indeed achieved without requiring parameter estimation consistency. To this end, we first establish that if the number of change points $ m_0 $ is known, the estimator \eqref{eq_estimation} can consistently recover the break points   (Section~\ref{sec:known}). Using a more careful analysis, we then show that in  the case when $ m_0 $ is unknown, the penalized least squares \eqref{eq_estimation} identifies a larger set of \emph{candidate} break points. 

Denote the set of estimated change points from \eqref{eq_estimation} by 
$$
	\mathcal{A}_n = \left\lbrace i \geq 2 :  \widehat{\theta}_i \neq 0 \right\rbrace. 
$$ 
The total number of estimated change points is then the cardinality of the set $ \mathcal{A}_n $. Thus, $ \widehat{m} = | \mathcal{A}_n | $. Let $ \widehat{t}_1, 
\ldots, \widehat{t}_{\widehat{m}}  $ be the estimated break points. Then, the relationship between $ \widehat{\theta}_j $ and $ \widehat{\Phi}^{(.,j)}$ in each of the estimated segments can be seen as:
\begin{equation}\label{equation }
\widehat{\Phi}^{(.,1)} = \widehat{\theta}_1, \hspace{1cm} \mbox{and} \hspace{1cm} \widehat{\Phi}^{(.,j)} = \sum_{i=1}^{\widehat{t}_j} \widehat{\theta}_i, \hspace{1cm} j = 1, 2, ..., \widehat{m}. 
\end{equation}
Our results in Section~\ref{sec:unknown} below show that $ \hat m \geq m_0 $. These results also show that there exist $ m_0 $ points within $\mathcal{A}_n$ that are  `close' to the true break points. These result justify the second step of our estimation procedure described in the next section, which searches over the break points in $\mathcal{A}_n$ in order to identify an optimal set of break points. In fact, it is shown in Section~\ref{sec:consistency} that using an information criterion combining (a) regular least squares, (b) the $ L_1 $ norm of the estimated parameters, and (c) a term penalizing the number of break points, we are able to complete the search and correctly identify the number of segments in the model. Additional details about the second stage procedure are given in Section~\ref{sec:consistency}.

%%%%%%%%%%%%%%
\section{Theoretical Analysis}\label{sec:asymptotic}

\subsection{Assumptions}\label{sec:assumptions}

To establish the asymptotic properties of the proposed estimator, we make the following assumptions. 

\begin{itemize}
\item[A1] For each fixed $ j =1, 2, ..., m_0+1 $, the process $ y_t^{(j)} = \sum_{i=1}^{d} \Phi^{(i,j)} y_{t-i}^{(j)} + \varepsilon_t $ is a stationary Gaussian time series. Denote the covariance matrices $ \Gamma_j (h) = \mbox{cov} \left( y_t^{(j)}, y_{t+h}^{(j)} \right) $ for $ t, h \in \mathbb{Z} $. Also, assume that the spectral density matrices $ f_j (\theta) = \frac{1}{2\pi} \sum_{l \in \mathbb{Z}} \Gamma_j (l) e^{-i l \theta} $, for $ \theta \in [-\pi, \pi] $ exist, and further $$ \mathcal{M}(f_j) = \mbox{ess sup}_{\theta \in [-\pi, \pi]} \Lambda_{\mbox{max}} (f_j(\theta)) < + \infty, $$ and $$ \textbf{m}(f_j) = \mbox{ess sup}_{\theta \in [-\pi, \pi]} \Lambda_{\mbox{min}} (f_j(\theta)) > 0, $$ where $ \Lambda_{\mbox{max}}(A) $ and $ \Lambda_{\mbox{min}}(A) $ are the largest and smallest eigenvalue of the symmetric or Hermitian matrix $ A $, respectively.
\item[A2] All the matrices $ \Phi^{(.,j)} $ are sparse. More specifically, denoting the number of nonzero elements in the $ i$-th row of $ \Phi^{(.,j)} $ by $ s_{ij} $, $ i = 1, 2, ..., p $ and $ j = 1, 2, ..., m_0 $, we have $ s_{ij} \ll p $ for all $ i, j $. Moreover, there exist positive constants $ v, M_\Phi > 0  $, and a large enough constant $ \nu^\prime > 0 $ such that, $$ \min_{1 \leq j \leq m_0} \frac{\max_{1 \leq i \leq p}  \left|\left| \Phi_i^{(.,j+1)} \right|\right|_2  }{\max_{1 \leq i \leq p}  \left|\left| \Phi_i^{(.,j)} \right|\right|_2}  \geq v^\prime, \min_{1 \leq j \leq m_0} \left|\left| \Phi^{(.,j+1)} - \Phi^{(.,j)}  \right|\right|_2  \geq v, \hspace{0.1cm} \mbox{and}  \hspace{0.1cm} \max_{1 \leq j \leq m_0+1} \left|\left| \Phi^{(.,j)}  \right|\right|_\infty  \leq M_\Phi. $$
Moreover, for each $ j = 1, 2, ..., m_0 + 1 $ and $ i = 1, ..., p $, define $ NZ_{ij} $ to be the set of all column indexes of $ \Phi_i^{(.,j)} $ at which there is a nonzero term. Also define $ NZ = \cup_{i,j} NZ_{ij} $, and further define $ s^\star = \max_{1 \leq i \leq p, 1 \leq j \leq m_0+1} |NZ_{ij}| $. Then, we have $ s^\star \sqrt{\frac{\log p}{n \gamma_n}} \rightarrow 0 $ as $ n \rightarrow \infty $.
\item[A3] There exists a positive sequence $ \gamma_n  $ vanishing such that $ \min_{1 \leq j \leq m_0+1} | t_j - t_{j-1} | / (n\gamma_n) \rightarrow + \infty  $, $ \gamma_n  / \left( s^\star  \lambda_n \right)  \rightarrow + \infty $, and $ \log(p)/(n \gamma_n) \rightarrow 0 $. 
\end{itemize}

Assumption~A1 helps us achieve appropriate probability bounds needed in the proofs. The second part of A1 will also be needed in the proof of consistency of the VAR parameters once the break points are detected. Assumption~A2 is a minimum distance-type requirement between the coefficients in different segments. The sequence $ \gamma_n $ is directly related to the detection rate of the break points $ t_j$'s. Assumption~A3 connects this rate to the tuning parameter chosen in the estimation procedure.

%%%%%%%%%%
\subsection{Prediction Error Consistency}\label{sec:predconsistency}
As pointed out earlier, and discussed in \cite{Chan_2014} and \cite{Harchaoui_2010}, the design matrix of the linear regression formulation of the piecewise VAR model may not satisfy the restricted eigenvalue condition needed for parameter estimation consistency \citep{BickelETAL_2009}.  Thus, as a first step in establishing the consistency of the proposed procedure, in this section we establish the prediction error consistency of LASSO estimator from \eqref{eq_estimation}.

\begin{theorem}\label{thm_pred_error}
Suppose A1 and A2 hold. Choose $ \lambda_n =  2 C \sqrt{\frac{\log(n) + 2\log(p) + \log(d)}{n}} $ for some $ C  > 0 $. Also, assume $ m_0 \leq m_n $ with $ m_n = o \left( \lambda_n^{-1} \right) $. Then, with high probability approaching to 1 as $ n $ goes to $ + \infty $, 
\begin{equation}\label{eq:predconsistency}
\frac{1}{n} \left|\left| Z \left( \widehat{\Theta} - \Theta  \right)  \right|\right|_2^2 \leq 4 C m_n   \max_{1 \leq j \leq m_0+1} \left\lbrace \sum_{i=1}^{p} \left( s_{ij} + s_{i(j-1)}  \right) \right\rbrace  M_\Phi \sqrt{\frac{\log(n) + 2\log(p) + \log(d)}{n}}.
\end{equation}
\end{theorem}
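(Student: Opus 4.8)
The plan is to establish a ``slow-rate'' prediction bound for the LASSO that, as anticipated in Section~\ref{sec:predconsistency}, avoids any restricted-eigenvalue assumption on $Z$: it needs only the basic inequality coming from the optimality of $\widehat\Theta$ in~\eqref{eq_estimation}, together with a maximal deviation bound for the noise gradient $n^{-1}Z^\top E$. Note first that $\sum_{i=1}^n\|\theta_i\|_1=\|\Theta\|_1$, so~\eqref{eq_estimation} is a plain $\ell_1$-penalized least-squares problem and the block structure is only used afterwards to read off the change points.

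\emph{Deterministic step.} By optimality of $\widehat\Theta$ and the substitution $Y=Z\Theta+E$, a routine expansion of the squared loss gives
\[
\tfrac1n\|Z(\widehat\Theta-\Theta)\|_2^2\;\le\;\tfrac2n\,E^\top Z(\widehat\Theta-\Theta)\;+\;\lambda_n\|\Theta\|_1-\lambda_n\|\widehat\Theta\|_1 .
\]
By H\"older's inequality $\tfrac2n\,|E^\top Z(\widehat\Theta-\Theta)|\le\tfrac2n\|Z^\top E\|_\infty\|\widehat\Theta-\Theta\|_1$, with $\|\cdot\|_\infty$ the entrywise maximum. Hence on the event $\mathcal{E}=\{2n^{-1}\|Z^\top E\|_\infty\le\lambda_n\}$ one obtains, using $\|\widehat\Theta-\Theta\|_1\le\|\widehat\Theta\|_1+\|\Theta\|_1$, the bound $\tfrac1n\|Z(\widehat\Theta-\Theta)\|_2^2\le 2\lambda_n\|\Theta\|_1$. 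It remains to bound $\|\Theta\|_1$: the only nonzero blocks of $\Theta$ are $\theta_1=\Phi^{(.,1)}$ and $\theta_{t_j}=\Phi^{(.,j+1)}-\Phi^{(.,j)}$ for $j=1,\dots,m_0$, so by the triangle inequality and the entrywise bound $\|\Phi_i^{(.,j)}\|_1\le s_{ij}M_\Phi$ from A2 we get $\|\Theta\|_1\le(m_0+1)M_\Phi\max_{1\le j\le m_0+1}\{\sum_{i=1}^p(s_{ij}+s_{i(j-1)})\}\le m_n M_\Phi\max_j\{\sum_i(s_{ij}+s_{i(j-1)})\}$ by $m_0\le m_n$. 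Substituting $\lambda_n=2C\sqrt{(\log n+2\log p+\log d)/n}$ yields precisely~\eqref{eq:predconsistency} on $\mathcal E$.

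\emph{Stochastic step (the main obstacle).} It remains to show $\mathbb P(\mathcal E)\to1$. The vector $Z^\top E$ has $q=np^2d$ entries; in block $i$, the entry with lag $k$ and coordinates $(a,b)$ equals $\sum_{t=d+i-1}^{T}y_{t-k,a}\varepsilon_{t,b}$, which has mean zero since $\varepsilon_t$ is independent of the past. Restricted to each stationary segment $y_t$ is a stable Gaussian VAR, so by A1 the pairs $(y_{t-k,a},\varepsilon_{t,b})$ are jointly sub-Gaussian with spectral density bounded by $\max_j\mathcal M(f_j)$; splitting the window $[d+i-1,T]$ at the break points and applying a Hanson--Wright / Bernstein-type deviation inequality for (cross-)quadratic forms in dependent Gaussian vectors --- of the kind used in \citet{Basu_2015} --- gives, for $\eta$ small, $\mathbb P(|n^{-1}\sum_t y_{t-k,a}\varepsilon_{t,b}|>\eta)\le c_1\exp(-c_2 n\eta^2)$ with $c_2$ depending only on $\max_j\mathcal M(f_j)$. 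A union bound over all $q=np^2d$ entries with $\eta=\lambda_n/2=C\sqrt{\log q/n}$ then gives $\mathbb P(\mathcal E^c)\le c_1\,q^{\,1-c_2C^2}\to0$ once $C$ is taken large enough relative to $c_2$. The two genuinely delicate points are: (i) the process is only piecewise stationary, so the transient following each break --- the difference between $y_t$ and the stationary $y_t^{(j)}$ --- must be shown to be negligible, which uses the geometric decay of the stable VAR together with the minimum-spacing part of A3; and (ii) the deviation bound must hold uniformly over the $n$ nested window start-points $i$, which is harmless because these are sub-sums of a single process and are covered by the same union bound.

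Finally, the right-hand side of~\eqref{eq:predconsistency} equals $2m_n\lambda_n\max_j\{\sum_i(s_{ij}+s_{i(j-1)})\}M_\Phi$, and the factor $m_n\lambda_n$ vanishes by the assumption $m_n=o(\lambda_n^{-1})$, which is what makes the prediction error small enough for the subsequent change-point analysis. The hardest part of the proof is the deviation bound of the stochastic step --- in particular, making the effect of piecewise stationarity and of the boundary transients rigorous.
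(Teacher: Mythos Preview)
Your proposal is correct and follows the paper's proof essentially line for line: the paper uses the same basic inequality plus H\"older to obtain $\tfrac{1}{n}\|Z(\widehat\Theta-\Theta)\|_2^2\le 2\lambda_n\sum_{i\in\mathcal A}\|\theta_i\|_1$ on the event $\{2n^{-1}\|Z^\top E\|_\infty\le\lambda_n\}$, and your stochastic step is exactly the content of Lemma~\ref{lemma_first}, which is proved by the \citet{Basu_2015} deviation bound plus a union over the $q=np^2d$ entries, just as you describe. One small note: the theorem assumes only A1--A2, so your appeal to the minimum-spacing part of A3 for the transient issue in~(i) is extraneous --- and in fact the paper's Lemma~\ref{lemma_first} does not address that transient issue at all, simply invoking the stationary-case argument of \citet{Basu_2015}; so you are being more careful here than the paper itself.
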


Theorem~\ref{thm_pred_error} is proved in Appendix~B. Note that this theorem imposes an upper bound on the model sparsity, as the right hand side of \eqref{eq:predconsistency} must go to zero as $ n \rightarrow \infty $. In Section~\ref{sec:consistency}, we specify the limit on the sparsity needed for consistent identification of structural break points. 

\subsection{The Case of Known $ m_0 $}\label{sec:known}

In this section, we study a simplified version of the problem, by assuming that the true number of change points are known. In this case, the task reduces to locating the break points. We obtain the following result for this simplified problem. 

\begin{theorem}\label{thm_known_m}
Suppose A1, A2, and A3 hold. If $ m_0 $ is known and $ |  \mathcal{A}_n| = m_0 $, then
$$  
\mathbb{P} \left( \max_{1 \leq i \leq m_0} | \widehat{t}_i - t_i | \leq n \gamma_n \right) \rightarrow 1, \hspace{1cm} \mbox{as} \,\,\, n \rightarrow + \infty.  
$$
\end{theorem}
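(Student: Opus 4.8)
The plan is to convert the prediction bound of Theorem~\ref{thm_pred_error} into a contradiction by a ``no long mismatched interval'' argument. Write $\Phi(t)$ for the true piecewise-constant transition matrix, equal to $\Phi^{(.,j)}$ on $t_{j-1}\le t<t_j$, and $\widehat\Phi(t)$ for the estimate read off from the minimizer $\widehat\Theta$ of \eqref{eq_estimation}; since $|\mathcal A_n|=m_0$, the map $t\mapsto\widehat\Phi(t)$ is piecewise constant with jumps exactly at $\widehat t_1<\dots<\widehat t_{m_0}$. Using $Z=I_p\otimes\mathcal X$ and the lag-vector notation $Y_{t-1}$ of \eqref{eq_bcd}, one has
\begin{equation*}
\frac1n\left\| Z\big(\widehat\Theta-\Theta\big)\right\|_2^2=\frac1n\sum_{t=d}^{T}\left\| \big(\widehat\Phi(t)-\Phi(t)\big)Y_{t-1}\right\|_2^2 .
\end{equation*}
Set $A_j^n=(t_j-n\gamma_n,\,t_j+n\gamma_n)$. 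By Assumption~A3 (which forces $n\gamma_n\to\infty$ and $\min_j|t_j-t_{j-1}|/(n\gamma_n)\to\infty$) these $m_0$ intervals are, for $n$ large, pairwise disjoint and contained in $[d,T]$. It then suffices to show that, with probability tending to $1$, each $A_j^n$ contains at least one estimated break point: since there are exactly $m_0$ break points distributed among $m_0$ disjoint intervals, each interval then contains exactly one, and since $(\widehat t_i)_i$ and $(t_j)_j$ are both increasing this pins down $\widehat t_j\in A_j^n$, i.e.\ $|\widehat t_j-t_j|<n\gamma_n$ for every $j$.

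Suppose for contradiction that, on an event of non-vanishing probability, some $A_{j_0}^n$ contains no estimated break. Then $\widehat\Phi(t)\equiv\widehat\Phi_0$ is constant on $A_{j_0}^n$, whereas $\Phi(t)=\Phi^{(.,j_0)}$ on $I^-:=(t_{j_0}-n\gamma_n,\,t_{j_0}]$ and $\Phi(t)=\Phi^{(.,j_0+1)}$ on $I^+:=(t_{j_0},\,t_{j_0}+n\gamma_n]$. By the triangle inequality and the jump condition in Assumption~A2, $\|\widehat\Phi_0-\Phi^{(.,j_0)}\|_2+\|\widehat\Phi_0-\Phi^{(.,j_0+1)}\|_2\ge\|\Phi^{(.,j_0+1)}-\Phi^{(.,j_0)}\|_2\ge v$, so on one of the two half-windows, say $I^-$, the error matrix $\Delta:=\widehat\Phi_0-\Phi^{(.,j_0)}$ satisfies $\|\Delta\|_2\ge v/2$. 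Restricting the identity above to the $n\gamma_n$ time points in that half-window gives
\begin{equation*}
\frac1n\left\| Z\big(\widehat\Theta-\Theta\big)\right\|_2^2\ \ge\ \frac1n\sum_{t\in I^-}\left\|\Delta\,Y_{t-1}\right\|_2^2\ =\ \gamma_n\,\big\langle\, \Delta^{\top}\Delta,\ \widehat\Gamma_{I^-}\,\big\rangle,\qquad \widehat\Gamma_{I^-}:=\frac{1}{n\gamma_n}\sum_{t\in I^-}Y_{t-1}Y_{t-1}^{\top}.
\end{equation*}

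The technical heart is to lower-bound this window-restricted quantity by something of order $\gamma_n$. Since the design need not obey a global restricted eigenvalue condition, this is where the work lies: the plan is to combine (i) a cone/sparsity control on the LASSO error $\widehat\Theta-\Theta$, hence on each $\widehat\Phi(t)-\Phi(t)$, extracted from the KKT conditions of Lemma~\ref{lemma_KKT} together with the sparsity of the truth in A2 (the effective sparsity being governed by $s^\star$), with (ii) a restricted-eigenvalue bound for the \emph{sub-sampled} autocovariance matrices $\widehat\Gamma_I$, holding with probability tending to $1$ uniformly over all windows $I$ of length at least $n\gamma_n$ that lie within a single true segment; the latter follows from the Gaussianity and spectral bounds of A1 and the rate $s^\star\sqrt{\log p/(n\gamma_n)}\to0$ of A2, both to be packaged as Appendix lemmas of Basu--Michailidis type. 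Combined with $\|\Delta\|_2\ge v/2$, these yield a constant $c_0>0$, depending only on the spectral lower bound in A1, with $\tfrac1n\|Z(\widehat\Theta-\Theta)\|_2^2\ge c_0\,v^2\,\gamma_n$ on the bad event. On the other hand, Theorem~\ref{thm_pred_error} (whose hypotheses are in force, with $m_n\ge m_0$ chosen as permitted there and $m_n=o(\lambda_n^{-1})$) bounds the left-hand side by $\delta_n:=4Cm_n\max_{j}\{\sum_i(s_{ij}+s_{i(j-1)})\}M_\Phi\sqrt{(\log n+2\log p+\log d)/n}$; using Assumption~A3 --- in particular $\gamma_n/(s^\star\lambda_n)\to\infty$ with the prescribed $\lambda_n$ --- one checks $\delta_n=o(\gamma_n)$, so that $\delta_n<c_0v^2\gamma_n$ for $n$ large, a contradiction. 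Hence every $A_j^n$ contains an estimated break with probability tending to $1$, and the theorem follows.

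I expect the main obstacle to be step (ii) above together with its interaction with (i): establishing a restricted-eigenvalue lower bound for the sub-sampled autocovariances of the dependent (VAR) Gaussian process, uniformly over the $O(n)$ relevant windows, and matching its sparsity level to the cone in which the (possibly dense) error matrices $\widehat\Phi(t)-\Phi(t)$ actually live --- this is precisely the point at which the rate $s^\star\sqrt{\log p/(n\gamma_n)}\to0$ of A2 is consumed. A secondary, purely bookkeeping, point is the verification that $\delta_n=o(\gamma_n)$, i.e.\ that the total row sparsity $\max_j\sum_i(s_{ij}+s_{i(j-1)})$ entering Theorem~\ref{thm_pred_error} is small enough relative to $\gamma_n/\lambda_n$.
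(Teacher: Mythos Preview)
Your strategy is genuinely different from the paper's, and it has a real gap that is not ``purely bookkeeping''. The paper does \emph{not} route the argument through the prediction bound of Theorem~\ref{thm_pred_error}; it works directly with the KKT identities of Lemma~\ref{lemma_KKT}, applied on the two sub-intervals to the left and right of the isolated true break, together with the $\ell_\infty$ deviation bounds of Lemma~\ref{lemma_bound}. The resulting inequalities are row-wise (in the $\|\cdot\|_{\infty,NZ}$ norm), so only the per-row sparsity $s^\star$ enters; the contradiction is then with the ratio condition on $\nu'$ in A2, not with a prediction-error rate.

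The concrete failure in your plan is the claim $\delta_n=o(\gamma_n)$. Theorem~\ref{thm_pred_error} gives
\[
\delta_n \;\asymp\; m_n\Big(\max_{j}\textstyle\sum_{i=1}^{p}(s_{ij}+s_{i(j-1)})\Big)\,\lambda_n,
\]
and the factor in parentheses is an \emph{aggregate} sparsity over all $p$ rows, of order up to $p\,s^\star$. Assumption~A3 only provides $\gamma_n/(s^\star\lambda_n)\to\infty$, i.e.\ control of the \emph{per-row} quantity $s^\star\lambda_n$; it says nothing about $m_n p\,s^\star\lambda_n$. In the high-dimensional regime the paper targets, this leaves you short by a factor of order $m_n p$, so $\delta_n=o(\gamma_n)$ cannot be deduced from A1--A3 and your contradiction does not close. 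This is precisely why the paper avoids the global prediction bound and instead reads off an entry-wise inequality from the KKT conditions.

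Your acknowledged ``main obstacle'' is also a genuine one, not just a technicality. The LASSO penalty here acts on the increments $\theta_i$, not on $\widehat\Phi(t)$, so there is no direct cone condition on $\Delta=\widehat\Phi_0-\Phi^{(.,j_0)}$; the cumulative sum of sparse increments need not be approximately sparse, and no standard basic-inequality argument delivers a usable cone for $\Delta$. Without that, the sub-window restricted-eigenvalue lower bound you need on $\langle\Delta^\top\Delta,\widehat\Gamma_{I^-}\rangle$ would require a \emph{full} minimum-eigenvalue control of $\widehat\Gamma_{I^-}$, which in turn needs $n\gamma_n\gtrsim pd$ --- again incompatible with the high-dimensional setting. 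The paper sidesteps this entirely by never needing an eigenvalue lower bound on a quadratic form in the (possibly dense) $\Delta$: the KKT route compares $\Gamma^d_i(0)\Phi'$ and $\Gamma^d_i(0)\widehat\Phi'$ coordinate-wise on the support $NZ$, and the only growth that enters is $s^\star$.

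In short: replace the prediction-error contradiction by the KKT-based one. Apply Lemma~\ref{lemma_KKT} on $[t_{i_0}\vee\widehat t_j,\,t_{i_0+1})$ and on $[t_{i_0+1},\,t_{i_0+2}\wedge\widehat t_{j+1})$, divide by the (long, $\ge n\gamma_n$) interval lengths, invoke Lemma~\ref{lemma_bound} to replace sample second moments by $\Gamma^d_i(0)$ up to $O\!\big(\sqrt{\log p/(n\gamma_n)}\big)$ in $\ell_\infty$, and use A3 to kill the $n\lambda_n/(n\gamma_n)$ and noise terms after multiplying by $s^\star$. The two resulting inequalities then force a bound on the ratio $\max_i\|\Phi_i^{(.,i_0+2)}\|_2/\max_i\|\Phi_i^{(.,i_0+1)}\|_2$ that contradicts the $\nu'$ clause of A2.
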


Theorem~\ref{thm_known_m} is proved in Appendix~B. In this theorem, the rate of consistency for this problem is $ n \gamma_n $, which can be chosen as small as possible assuming that conditions A2 and A3 hold. This is achieved by examining the KKT condition for the optimization problem \eqref{eq_estimation}, stated in Lemma~\ref{lemma_KKT} and using probability bounds in Lemma~\ref{lemma_bound}; these lemmas are given in the Appendix~A.
% we are able to find the consistency rate for estimating the break points $ t_i$'s, $ i = 1, ..., m_0 $.  
It is worth noting that $ \gamma_n $ also depends on the minimum distance between consecutive true break points, as well as the number of time series, $ p $. When $ m_0 $ is finite, one can choose $ \gamma_n = {(\log n \log p)}/{n} $ or $ \gamma_n = {(\log\log n \log p)}/{n} $. This means that the convergence rate for estimating the relative locations of the break points, i.e., $ t_i/T $ using $ \widehat{t}_i/T $ could be as low as $ {(\log\log n \log p)}/{n} $. In the univariate case, \cite{Chan_2014} showed a convergence of order $ {(\log n)}/{n} $. The rate found here is larger than the univariate case by an order less than $ \log p $ which is due to the growing number of time series. This logarithmic factor captures the additional difficulty in estimating the structural break points in high-dimensional settings. 

%%%%%%%%%%%
\subsection{The Case of Unknown $ m_0 $}\label{sec:unknown}
We now turn to the more general case of unknown $ m_0 $. Our next result shows that the number of selected change points $ \widehat{m} $ based on the estimation procedure \eqref{eq_estimation} will be at least as large as the true number $ m_0 $. Moreover, each true change point will have at least one estimated point in its $ n \gamma_n$-radius neighborhood. 

Before stating the theorem, we need some additional notations. Let $ \mathcal{A} = \lbrace t_1, t_2, ..., t_{m_0} \rbrace $ be the set of true change points. Following  \cite{Boysen_2009} and \cite{Chan_2014}, define the Hausdorff distance between two sets as  
$$ 
d_H (A, B) = \max_{b \in B} \min_{a \in A} |b - a|. 
$$
We obtain the following results. 

\begin{theorem}\label{thm_Hausdorff}
Suppose A1, A2, and A3 hold. Then as $ n \rightarrow + \infty $,
$$ \mathbb{P} \left(  | \mathcal{A}_n | \geq m_0  \right) \rightarrow 1,$$
and
$$  \mathbb{P} \left(  d_H \left( \mathcal{A}_n, \mathcal{A}  \right)  \leq n \gamma_n \right) \rightarrow 1. $$
\end{theorem}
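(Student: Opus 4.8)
The plan is to first reduce the cardinality statement to the Hausdorff statement, and then prove the Hausdorff statement by a contradiction argument built on the KKT conditions of \eqref{eq_estimation} together with the probabilistic bounds of Lemma~\ref{lemma_bound}, in the spirit of the proof of Theorem~\ref{thm_known_m}. For the reduction, note that by A3 we have $\min_{1\le j\le m_0+1}|t_j-t_{j-1}|/(n\gamma_n)\to\infty$, so for $n$ large the intervals $[t_j-n\gamma_n,\,t_j+n\gamma_n]$, $j=1,\dots,m_0$, are pairwise disjoint (a deterministic fact). On the event $\{d_H(\mathcal A_n,\mathcal A)\le n\gamma_n\}$, every true break point $t_j$ admits some $\widehat t\in\mathcal A_n$ with $|\widehat t-t_j|\le n\gamma_n$; since the neighborhoods are disjoint, these $\widehat t$'s are distinct, so $|\mathcal A_n|\ge m_0$. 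Hence $\mathbb P(|\mathcal A_n|\ge m_0)\ge\mathbb P(d_H(\mathcal A_n,\mathcal A)\le n\gamma_n)$ for $n$ large, and it suffices to prove the second display.

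\emph{Contradiction argument.} Write $\widehat\beta_t=\sum_{i\le t}\widehat\theta_i$ and $\beta_t=\sum_{i\le t}\theta_i$, so $\beta_t=\Phi^{(.,j)}$ for $t_{j-1}\le t<t_j$, while $\widehat\beta_t$ is constant on any block of indices containing no element of $\mathcal A_n$. Suppose that, with probability bounded away from $0$, some true break point $t_j$ satisfies $\mathcal A_n\cap(t_j-n\gamma_n,\,t_j+n\gamma_n]=\emptyset$. By A3 this window lies strictly inside $(t_{j-1},t_{j+1})$, hence on it $\widehat\beta_t\equiv b$ for a single matrix $b$, whereas $\beta_t=\Phi^{(.,j)}$ on its left half $I_L$ and $\beta_t=\Phi^{(.,j+1)}$ on its right half $I_R$, each of length of order $n\gamma_n$. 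Since every index in $I_L$ lies outside $\mathcal A_n$, the corresponding blocks of $\widehat\Theta$ vanish, so the KKT conditions of Lemma~\ref{lemma_KKT} give $\|\tfrac2n\sum_{l\ge i}Y_{l-1}\widehat\varepsilon_l^\prime\|_\infty\le\lambda_n$ for each such $i$, where $\widehat\varepsilon_l=y_l-\widehat\beta_l^\prime Y_{l-1}$. Differencing this identity at the endpoints of $I_L$ isolates $\tfrac2n\big(\sum_{l\in I_L}Y_{l-1}Y_{l-1}^\prime\big)(\Phi^{(.,j)}-b)+\tfrac2n\sum_{l\in I_L}Y_{l-1}\varepsilon_l^\prime$, whose $\|\cdot\|_\infty$ is at most $2\lambda_n$; by the sub-Gaussian concentration of Lemma~\ref{lemma_bound} (a Hanson--Wright--type bound for VAR processes, as in \citet{Basu_2015}), the noise term is $O(\lambda_n)$ uniformly over all $O(n)$ candidate windows, so $\|G_L(\Phi^{(.,j)}-b)\|_\infty=O(\lambda_n/\gamma_n)$ with $G_L=|I_L|^{-1}\sum_{l\in I_L}Y_{l-1}Y_{l-1}^\prime$; the identical argument on $I_R$ gives $\|G_R(\Phi^{(.,j+1)}-b)\|_\infty=O(\lambda_n/\gamma_n)$.

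\emph{Closing the argument.} The LASSO basic inequality together with the noise bound of Lemma~\ref{lemma_bound} (the same ingredient behind Theorem~\ref{thm_pred_error}) yields the usual cone constraint for $\widehat\Theta-\Theta$, which propagates to the cumulative sums, so that $\Phi^{(.,j)}-b$ and $\Phi^{(.,j+1)}-b$ are approximately supported on the set $NZ$ of A2. On these cones, Assumption~A1 (in particular $\textbf{m}(f_j)>0$) and the window length $n\gamma_n$, which is large compared with $s^\star\log p$ by A2, provide a restricted-eigenvalue lower bound for $G_L$ and $G_R$ with constant bounded below in probability, uniformly over windows, in the sense of \citet{BickelETAL_2009}. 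Combining, $\|\Phi^{(.,j)}-b\|_2^2\lesssim\|\Phi^{(.,j)}-b\|_1\,\|G_L(\Phi^{(.,j)}-b)\|_\infty\lesssim\sqrt{s^\star}\,\|\Phi^{(.,j)}-b\|_2\,\lambda_n/\gamma_n$, so $\|\Phi^{(.,j)}-b\|_2=O(\sqrt{s^\star}\lambda_n/\gamma_n)\to0$ by A3, and likewise $\|\Phi^{(.,j+1)}-b\|_2\to0$. Then $\|\Phi^{(.,j+1)}-\Phi^{(.,j)}\|_2\le\|\Phi^{(.,j)}-b\|_2+\|\Phi^{(.,j+1)}-b\|_2\to0$, contradicting the minimum-jump condition $\|\Phi^{(.,j+1)}-\Phi^{(.,j)}\|_2\ge v$ of A2. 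Hence with probability tending to $1$ every true break point has an element of $\mathcal A_n$ within $n\gamma_n$, which is the Hausdorff bound; combined with the reduction above this also gives $\mathbb P(|\mathcal A_n|\ge m_0)\to1$.

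\emph{Main obstacle.} The crux is the uniform restricted-eigenvalue / Gram-matrix control over \emph{short} windows: one needs $|I_L|^{-1}\sum_{l\in I_L}Y_{l-1}Y_{l-1}^\prime$ to behave like $\Gamma_j$ in the relevant (approximately $s^\star$-sparse) directions, simultaneously for all $O(n)$ candidate windows, even though $|I_L|\asymp n\gamma_n$ is far smaller than both $n$ and the ambient dimension $pd$, and the summands are strongly dependent rather than i.i.d. This forces a careful use of the mixing / spectral-density structure of A1 through Lemma~\ref{lemma_bound} together with a union bound that the condition $s^\star\sqrt{\log p/(n\gamma_n)}\to0$ of A2 is precisely tuned to absorb; the companion delicate step is propagating the cone condition from the individual blocks $\theta_i$ to the cumulative sums $\widehat\beta_t$, so that the restricted eigenvalue is invoked only in admissible directions.
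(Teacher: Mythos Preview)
Your overall architecture is right and matches the paper's: reduce to showing that no true break point can be isolated from $\mathcal A_n$ by more than $n\gamma_n$, set up two windows $I_L,I_R$ on either side of the missed $t_j$, apply the KKT identities of Lemma~\ref{lemma_KKT} together with the deviation bounds of Lemma~\ref{lemma_bound}, and derive a contradiction with A2. Your reduction of the cardinality statement to the Hausdorff statement via disjoint $n\gamma_n$-neighborhoods is also correct (the paper proves the cardinality part directly, but the logic is the same).

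The genuine gap is in how you close the contradiction. You write that the LASSO basic inequality gives a cone constraint on $\widehat\Theta-\Theta$ which ``propagates to the cumulative sums'', so that $b-\Phi^{(.,j)}$ and $b-\Phi^{(.,j+1)}$ are approximately supported on $NZ$, and then invoke a restricted-eigenvalue lower bound for $G_L,G_R$. This propagation step is not justified and is in fact the crux: the global cone $\|(\widehat\Theta-\Theta)_{S^c}\|_1\le 3\|(\widehat\Theta-\Theta)_S\|_1$ constrains the \emph{concatenated} block vector, but $b=\sum_{i\le t}\widehat\theta_i$ is a sum over up to $n$ blocks, and there is no mechanism forcing the off-$NZ$ coordinates of $b-\Phi^{(.,j)}$ to be small relative to the on-$NZ$ ones. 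Without that cone membership, $G_L$ is a $pd\times pd$ Gram matrix built from only $\asymp n\gamma_n$ dependent samples, so no restricted-eigenvalue lower bound is available, and your chain $\|\Phi^{(.,j)}-b\|_2^2\lesssim\|\Phi^{(.,j)}-b\|_1\,\|G_L(\Phi^{(.,j)}-b)\|_\infty\lesssim\sqrt{s^\star}\,\|\Phi^{(.,j)}-b\|_2\,\lambda_n/\gamma_n$ breaks at the first inequality.

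The paper avoids this entirely by never requiring any sparsity or cone structure on $b=\widehat\Phi^{(.,j+1)}$. It restricts the sup-norm to the index set $NZ$ only on the side involving the \emph{true} coefficient (which is genuinely supported there), obtaining a lower bound of the form $\|(\Gamma-A)\Phi'^{(.,i_0+1)}\|_{\infty,NZ}\ge c_1(s^\star)^{-1}\max_i\|\Phi_i^{(.,i_0+1)}\|_2$, while the estimated part is controlled crudely by $\|(\Gamma-A)\widehat\Phi'^{(.,j+1)}\|_\infty\le c_2\|\widehat\Phi'^{(.,j+1)}\|_1$. Combining the two windows does not give $\|\Phi^{(.,j+1)}-\Phi^{(.,j)}\|_2\to 0$; instead it yields a bound on the \emph{ratio} $\max_i\|\Phi_i^{(.,i_0+2)}\|_2\big/\max_i\|\Phi_i^{(.,i_0+1)}\|_2\le c_5$, which contradicts the $\nu'$ clause of A2 rather than the minimum-jump clause $v$ you invoke. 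If you want to keep your cleaner triangle-inequality endgame via $v$, you would need an independent argument controlling the sparsity (or $\ell_1$-norm) of $b$, which the present setup does not supply.
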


The second part of Theorem~\ref{thm_Hausdorff} shows that even though we  select more points than needed, there exists a subset of the estimated points with size $ m_0 $, which estimates the true break points at the same rate as if $ m_0 $ was known. This result motivates the second stage of our estimation procedure, discussed in the next section, which removes the additional estimated break points. 

\subsection{Consistent Estimation of Structural Breaks}\label{sec:consistency}
Theorem~\ref{thm_Hausdorff} shows that the penalized estimation procedure \eqref{eq_estimation} over-estimates the number of change points. A second stage screening is thus needed to consistently find the true number of change points. Our proposal, presented next, is a modification of the screening procedure of \cite{Chan_2014}. The basic idea is to develop an \emph{information criterion} based on a new penalized least squares estimation procedure, in order to screen the candidate break points found in the first estimation stage. Formally, for a fixed $ m $ and estimated change points $ s_1, ..., s_m $, we form the following linear regression:

\begin{equation}\label{eq_regression_second}
\begin{pmatrix} y_d^\prime \\  y_{d+1}^\prime \\ \vdots  \\ y_T^\prime \end{pmatrix} = \begin{pmatrix} Y_{d-1}^\prime  \\ \vdots & 0 & \ldots & 0 \\ Y_{s_1-1}^\prime \\ & Y_{s_1}^\prime \\ 0 & \vdots & \ldots & 0 \\ & Y_{s_2-1}^\prime \\ &&& \\ \vdots & \vdots & \ddots & \vdots \\ &&& \\  &&& Y_{s_m}^\prime \\ 0 & 0 && \vdots \\ &&& Y_{T}^\prime \end{pmatrix} \begin{pmatrix} \theta_1^\prime \\  \theta_{2}^\prime \\ \vdots  \\ \theta_{m+1}^\prime \end{pmatrix} + \begin{pmatrix} \varepsilon_d^\prime \\  \varepsilon_{d+1}^\prime \\ \vdots  \\  \varepsilon_T^\prime \end{pmatrix}.
\end{equation}
This regression can be written compactly as 
$$ 
\mathcal{Y} = \mathcal{X}_{s_1, ..., s_m} \theta_{s_1, ..., s_m} + \varepsilon (n), 
$$
where $ \mathcal{X}_{s_1, ..., s_m} \in \mathbb{R}^{n \times q_m} $,  $ \theta_{s_1, ..., s_m} = \left( \theta_{(1,s_1)}^\prime, \theta_{(s_1,s_2)}^\prime, ..., \theta_{(s_m,T)}^\prime \right)^\prime \in \mathbb{R}^{q_m \times p} $, with $ q_m = (m+1)pd $. We estimate $ \theta_{s_1, ..., s_m} $ using the following LASSO regression:
\begin{equation}\label{eq_estimation_second}
\widehat{\theta}_{s_1, ..., s_m} = \mbox{argmin}_{\theta} || \mathcal{Y} - \mathcal{X}_{s_1, ..., s_m} \theta ||_F^2 + n \,  \eta_n \sum_{i=1}^{m+1} || \theta_i ||_1,
\end{equation}
with tuning parameter $ \eta_n $. 

Define
\begin{equation}\label{ddd}
L_n(s_1, s_2, ..., s_m; \eta_n) =  || \mathcal{Y} - \mathcal{X}_{s_1, ..., s_m} \widehat{\theta}_{s_1, ..., s_m} ||_F^2 + n \, \eta_n \sum_{i=1}^{m+1} || \widehat{\theta}_{(s_{i-1},s_i)} ||_1,
\end{equation}
with $ s_0 = d $ and $ s_{m+1} = T $. Then, for a suitably chosen sequence $ \omega_n $, specified in Assumption~A4 below, consider the following information criterion:
$$ 
\mathrm{IC} (s_1, ..., s_m; \eta_n) =  L_n(s_1, s_2, ..., s_m; \eta_n)  + m \omega_n. 
$$

The second stage of our procedure selects a subset of $\dhat{m}$ break points by solving the problem %\textcolor{red}{the double hat notation is too busy -- we should replace the double hats with e.g. tilde later}
\begin{equation}\label{eq_selection}
( \dhat{m}, \dhat{t}_1, ..., \dhat{t}_{\dhat{m}}  ) =  \mbox{argmin}_{0 \leq m \leq |\mathcal{A}_n|, \, \textbf{s} = (s_1, ..., s_m) \in \mathcal{A}_n } \mathrm{IC}(\textbf{s}; \eta_n).
\end{equation}

%\textcolor{red}{we should probably break this subsection into two sub-subsections}

To establish the consistency of the proposed two-state selection procedure \eqref{eq_selection}, we need an additional assumption. 
\begin{itemize}
\item[A4] Let $ d_n^\star = \sum_{j=1}^{m_0+1} \sum_{i=1}^{p} s_{ij} $ be the total sparsity of the model. Then, $ m_0 n \gamma_n d_n^\star /\omega_n \rightarrow 0 $, and $ \min_{1 \leq j \leq m_0+1} | t_j - t_{j-1} | / (m_0 \omega_n) \rightarrow + \infty $. Also, either (a) $ m_0 \sqrt{\frac{\log p}{n \gamma_n}} = o (1) $ and $ \eta_n = \gamma_n $ or (b) $ m_0 \sqrt{\frac{\log p}{n \gamma_n}} = O (1) $ and $ \eta_n = C \gamma_n $ for some large enough positive constant $ C > 0 $. 
\end{itemize}

We can now state our main consistency result. 
\begin{theorem}\label{thm_selection}
Suppose A1, A2, A3, and A4 hold. Then, as $ n \rightarrow + \infty $, the minimizer  $ ( \dhat{m}, \dhat{t}_1, ..., \dhat{t}_{\dhat{m}}  ) $ of \eqref{eq_selection} satisfies 
$$ 
\mathbb{P} \left( \dhat{m} = m_0  \right) \rightarrow 1.
$$
Moreover, there exists a positive constant $ B > 0  $ such that 
$$
\mathbb{P} \left( \max_{1 \leq i \leq m_0} | \dhat{t}_i - t_i | \leq B n \gamma_n d_n^\star \right) \rightarrow 1.  
$$
\end{theorem}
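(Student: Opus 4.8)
\textbf{Proof proposal for Theorem~\ref{thm_selection}.}

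The plan is to show that the information criterion $\mathrm{IC}(\mathbf{s};\eta_n)$ is, with probability tending to one, strictly larger at any candidate configuration that either (i) omits a true break point, or (ii) includes spurious break points, than at a configuration that contains exactly one estimated point in the $n\gamma_n$-neighborhood of each true $t_j$ (such a configuration exists by Theorem~\ref{thm_Hausdorff}). Since the minimization in \eqref{eq_selection} ranges over subsets of $\mathcal{A}_n$, and by Theorem~\ref{thm_Hausdorff} $\mathcal{A}_n$ contains such a ``good'' size-$m_0$ subset on an event of probability approaching one, establishing these two strict inequalities forces $\dhat{m}=m_0$ and simultaneously pins down the locations. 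I would organize the argument around the decomposition of the residual sum of squares $\|\mathcal{Y}-\mathcal{X}_{\mathbf{s}}\widehat\theta_{\mathbf{s}}\|_F^2$ into a ``bias'' part coming from fitting a piecewise-constant VAR with break set $\mathbf{s}$ that misaligns with $\mathcal{A}$, and a stochastic part controlled by the probability bounds of Lemma~\ref{lemma_bound}.

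First I would handle the \emph{under-fitting} direction. If $\mathbf{s}$ misses some true break point $t_j$ by more than $O(n\gamma_n)$, then within at least one segment induced by $\mathbf{s}$ the true coefficient matrix takes two genuinely different values, and by Assumption~A2 (the minimum jump size $v$ and the ratio condition $v'$) the best single VAR fit on that segment incurs an unavoidable approximation error. Using A1 (the spectral density bounds give $\|\mathcal{X}_S u\|^2 \asymp n\|u\|^2$ type control on the relevant sub-blocks, via Lemma~\ref{lemma_bound}) I would lower bound this by a quantity of order $\min_j|t_j-t_{j-1}|\cdot v^2$, up to constants. Assumption~A4 (specifically $\min_j|t_j-t_{j-1}|/(m_0\omega_n)\to\infty$ and $m_0 n\gamma_n d_n^\star/\omega_n\to 0$) then guarantees this deterministic gain dominates both the $L_1$-penalty discrepancy $n\eta_n\sum\|\widehat\theta_i\|_1$ (bounded using $\eta_n\in\{\gamma_n, C\gamma_n\}$, A2/A3 and the $\ell_1$ bounds derivable from Theorem~\ref{thm_pred_error}) and the penalty term $m\omega_n$, so $\mathrm{IC}$ strictly increases. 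The locations claim, $\max_i|\dhat t_i - t_i|\le B n\gamma_n d_n^\star$, comes out of the same computation: if an estimated point is at distance $\delta$ from its matched true point, the residual sum of squares carries an excess of order $\delta v^2 - (\text{noise cross term of order }\sqrt{\delta \log p})$, and forcing this to be smaller than the $\omega_n$ budget yields $\delta \lesssim n\gamma_n d_n^\star$ after invoking A4.

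Next, the \emph{over-fitting} direction: if $\mathbf{s}$ already contains a good size-$m_0$ subset but has extra points, adding a spurious break point can only decrease $\|\mathcal{Y}-\mathcal{X}_{\mathbf{s}}\widehat\theta_{\mathbf{s}}\|_F^2$ by a stochastic amount — essentially the squared norm of a projection of the noise onto an $O(pd)$-dimensional extra block, which by the sub-Gaussian / spectral bounds of Lemma~\ref{lemma_bound} and a union bound over the $|\mathcal{A}_n| = O_p(m_n)$ candidate points is of order $d_n^\star \log(p\vee n)$ (the relevant dimension is the sparsity $d_n^\star$, not the ambient $pd$, because of the $L_1$ penalty and A2). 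This must be weighed against the increment $\omega_n$ in the penalty term; A4's requirement $m_0 n\gamma_n d_n^\star/\omega_n\to\infty$ (equivalently $\omega_n$ dominates $d_n^\star$ times lower-order logs once $\gamma_n \gtrsim \log p/n$) ensures the penalty wins, so extra points strictly increase $\mathrm{IC}$. Combining the two directions, on an event of probability $\to 1$ the unique-type minimizer has $\dhat m = m_0$ and the stated location accuracy.

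The main obstacle I anticipate is the over-fitting step, specifically getting a sharp enough uniform bound on how much a single spurious break point can reduce the penalized residual sum of squares \emph{simultaneously over all} candidates in $\mathcal{A}_n$ and all positions it could be inserted, while keeping the effective dimension at the sparsity level $d_n^\star$ rather than $pd$. This requires carefully exploiting the $L_1$ penalty in \eqref{eq_estimation_second} (so that $\widehat\theta_{\mathbf{s}}$ stays in a cone and only $\sim d_n^\star$ coordinates matter) together with a restricted-eigenvalue-type lower bound on the \emph{correctly aligned} sub-design (which does hold, unlike for the full first-stage design, because within each true segment the process is stationary with $f_j$ bounded below by A1) — and then a concentration bound for the noise projection that is uniform over the polynomially-many candidate insertions. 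The delicate accounting is that the $L_1$ penalty gap and the RSS gain are both of order $\eta_n\cdot(\text{sparsity})$ and $\gamma_n$-comparable, so the constants in A4(b) (the ``large enough $C$'') must be tracked to ensure the penalty term $\omega_n$ genuinely dominates; this is where the case split in A4 between regimes (a) and (b) does its work.
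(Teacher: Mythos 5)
Your proposal follows essentially the same route as the paper: compare $\mathrm{IC}$ at any under- or over-fitted configuration against the good size-$m_0$ subset of $\mathcal{A}_n$ guaranteed by Theorem~\ref{thm_Hausdorff}, using an unavoidable bias term of order $\Delta_n$ (this is exactly Lemma~\ref{lemma_selection}) for the under-fitting direction, the dominance of $\omega_n$ over $n\gamma_n d_n^\star$ for the over-fitting direction, and the same contradiction argument (with $B$ proportional to $K/c$) for the location bound. Two small remarks: you misstate A4 as $m_0 n\gamma_n d_n^\star/\omega_n \to \infty$ when it is $\to 0$ (your parenthetical and your actual argument use the correct direction, so this is only a typo); and the over-fitting step you flag as the main obstacle is dispatched in the paper without any restricted-eigenvalue or noise-projection machinery --- the cross term is bounded by H\"older as $\left\|\sum_t Y_{t-1}\varepsilon_t'\right\|_\infty$ times an $\ell_1$ norm, and the relevant $\ell_1$ norms are controlled because the penalty $n\eta_n\sum_i\|\widehat{\theta}_i\|_1$ is itself part of $L_n$, giving $L_n \geq \sum_t\|\varepsilon_t\|_2^2 - K_4\, m\, n\gamma_n d_n^\star$ directly and uniformly over candidate configurations.
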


The proof of the theorem, given in Appendix~B relies heavily on the result presented in Lemma~\ref{lemma_selection}, which is stated and derived in Appendix~A.  

\textbf{Remark 1}. For the case when $ m_0 $ is finite, the rates can be set to $ \gamma_n = {(\log n \log p)}/{n} $, $ \lambda_n = o \left( {(\log n \log p)}/{n p} \right) $, $ \eta_n = \gamma_n $, and $ \omega_n = (\log n \log p)^{1+v} $ for some positive $ v > 0 $. For these rates, the model can have total sparsity $ d_n^\star = o \left( (\log n \log p)^{v}  \right) $. 

\textbf{Remark 2}. The proposed two-stage procedure can be also applied to low-dimensional time series. For example, with $ p $ as low as $ p = c n^a $ for positive constants $ c, a $, the probability bounds derived in Lemma~\ref{lemma_bound} would be strong enough to get the desired consistency results shown for the high-dimensional case.

%%%%%%%%%%%%%%%%
%\section{Implementation}\label{sec:comps}
%
%In this section, the block coordinate descent algorithm used in implementing the proposed method is briefly explained. In this algorithm, every time we update one of the $ \theta_i$'s until convergence occurs. Based on the KKT condition mentioned in lemma (\ref{lemma_KKT}), for a fixed $ i = 1, 2, ..., n $, each update of $ \theta_i$ at iteration $ h+1 $  can be calculated as 
%
%\begin{equation}
%\theta_i^{\prime} (h+1) = { \left( \sum_{l=i}^{n} Y_{l-1} Y_{l-1}^\prime \right) }^{-1} S \left( \sum_{l=i}^{n} Y_{l-1} y_{l} - \sum_{j \ne i} \left( \sum_{l = \max(i,j)}^{n} Y_{l-1} Y_{l-1}^\prime \right) \theta_j^\prime (h) ; \lambda \right),
%\end{equation}
%where $ S(. ; \lambda) $ is the element--wise soft--thresholding function on all the components of the input matrix. The soft--thresholding function on $ x $ gives the values $ x - \lambda $ when $ x > \lambda $, $ x + \lambda $ when $ x < - \lambda $, and $ 0 $ when $ |x| \leq \lambda $. The iteration stops when $ || \theta (h+1) - \theta (h) ||_\infty < \mbox{tolerance} $ with $ \mbox{tolerance} = 10^{-3} $ in the simulation runs. Note that in this algorithm, the whole block of $ \theta_i$ with $ p^2 d $ elements is updated at once which reduces the computation time dramatically. Also, in each update of $ \theta_i$, the previous updated values of other blocks, i. e. other $ \theta_j$'s with $ j \ne i $ is used to speed up the convergence. 

%%%%%%%%%%%%%%
\section{Simulations}\label{sec:sims}

In this section, the performance of the proposed two stage model will be evaluated under different simulation scenarios. In all scenarios, 100 data sets are randomly generated with $ T = 300 $, $ p = 20 $, $ d = 1 $, $ m_0 = 2 $. All time series have mean zero, and $ \Sigma_\varepsilon = 0.01 I_T $. We consider three different scenarios.

\begin{enumerate}
\item \emph{Simple $ \Phi $ and break points close to the center}. In the first scenario, the autoregressive coefficients are chosen to have the same structure but different values as displayed in Figure~\ref{fig_phi}. In this scenario, $ t_1 = 100 $ and $ t_2 = 200 $, which means the break points are not close to the boundaries. 

Figure~\ref{fig_step_1} shows the selected break points in one out of 100 simulated data sets. As expected from Theorem~\ref{thm_Hausdorff}, more than 2 change points are detected using the first stage estimator. However, there are always points selected in a small neighborhood of true change points. The second screening stage eliminates the extra candidate points leaving with only two closest points to the true change points. Figure~\ref{fig_step_2} shows the final selected points in all the 100 simulation runs. The mean and standard devision of locations of selected points, relative to the sample size $ T $, are shown in Table~\ref{table_sim_1}. (More specifically, the mean and standard deviation of $ \dhat{t}_1 / T $ and $ \dhat{t}_2 / T $ are reported in the table.) It can be seen from the results that the two stage procedure accurately detects both the number of break points, as well as their locations.

\item \emph{Simple $ \Phi $ and break points close to the boundaries}. Here, $ t_1 = 30 $ and $ t_2 = 250 $. The final selected points are shown in Figure~\ref{fig_sim_3}, and mean and standard deviation of the location of selected points, relative to the sample size $ T $, are shown in Table~\ref{table_sim_3}. Compared to scenario 1, when when break points are closer to the boundaries, the estimated locations are less accurate. The results also show that some of the break points may not be correctly detected in this setting. 

\item \emph{Randomly structured $ \Phi $ and break points close to the center}. As in scenario 1, in this case we set $ t_1 = 100 $ and $ t_2 = 200 $. However, the coefficients are chosen to be randomly structured. As a result, detecting break points is more challenging in this setting. 
%\textcolor{red}{why? it may help if you explain how the random structures are set up} \textcolor{blue}{All the components of the coefficients are Unif(-1,1), but divided by their distance from the diagonal and their time lag, and finally making them sparse by cutting them using a specified upper bound. For example, if they are less than 0.05 in absolute value, then they will be set to zero. I think the first reason why this case is harder is that the coefficients $ \Phi^{(,j)}$'s are less sparse here and this makes it harder to detect if $ \theta $ should be zero or not by the first procedure. Second, the $ L_2 $ difference between consecutive $ \Phi^{(,j)}$'s are less here than the simulation 1 due to the randomly assigned positions and again this makes it harder to detect the break points. Note that in the first scenario, all the non--zero coefficients for $ \Phi^{(,1)} $ is -0.2 and for $ \Phi^{(,2)} $ is 0.75. Since their locations are the same, it makes $ || \Phi^{(,2)} - \Phi^{(,1)} ||_2 $ very large. } \textcolor{green}{I think the sparsity does play a role, but the picture suggests that there are quite a bit of entries that are different. What are the L2 norms of differences this case compared to scenario 1?}
The autoregressive coefficients for this scenario are displayed in Figure~\ref{fig_phi_2}. 

The selected break points in this scenario are shown in Figure~\ref{fig_step_sim_2}, and the mean and standard deviation of locations of the selected break points, relative to the sample size $ T $, are shown in Table~\ref{table_sim_2}. 
The results suggest that this setting---with randomly structured $ \Phi$'s---is the most difficult scenario. In fact, the identification of the number of change points in this setting, as measured by the selection rate of the break points, is the worst among the simulations considered---the detection rate drops to $ 92\% $ compared to $100\%$ in scenario 1. Further, the standard deviation of the selected break point locations are considerably larger. The inferior performance of the proposed method in this scenario could be due to the fact that the $ L_2 $ distance between the consecutive autoregressive coefficients are less than the previous two cases. This would make it harder to identify the exact location of the break points.
\end{enumerate}

\begin{figure}[H]
\begin{center}
\includegraphics[height=0.3\textheight]{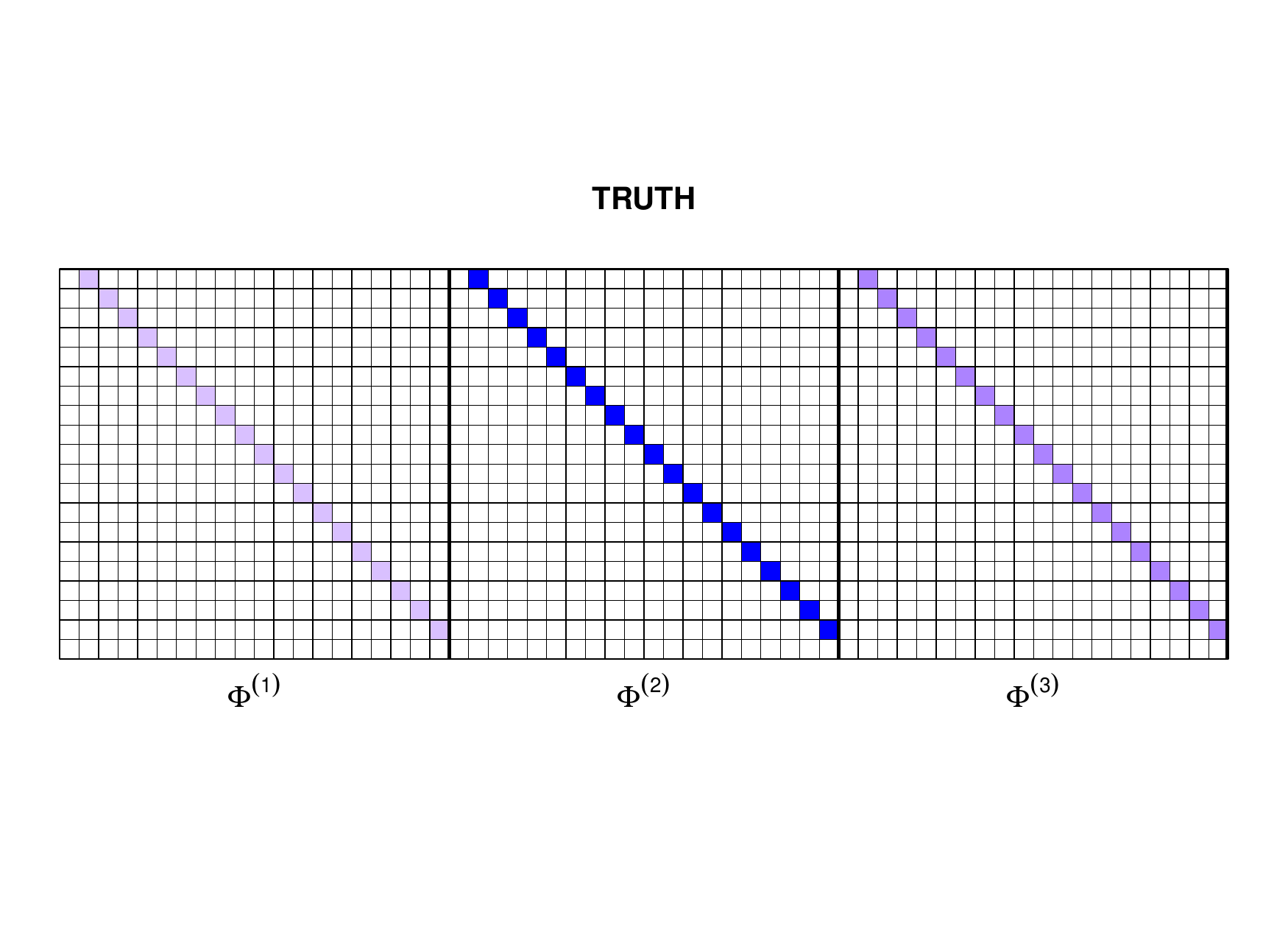}
\vspace{-2.0cm}
\caption{True autoregressive coefficients for the three segments used in the simulation scenario 1. }
\label{fig_phi}
\end{center}
\end{figure}

\vspace{-0.75cm}

\begin{figure}[H]
\begin{center}
\includegraphics[height=0.3\textheight]{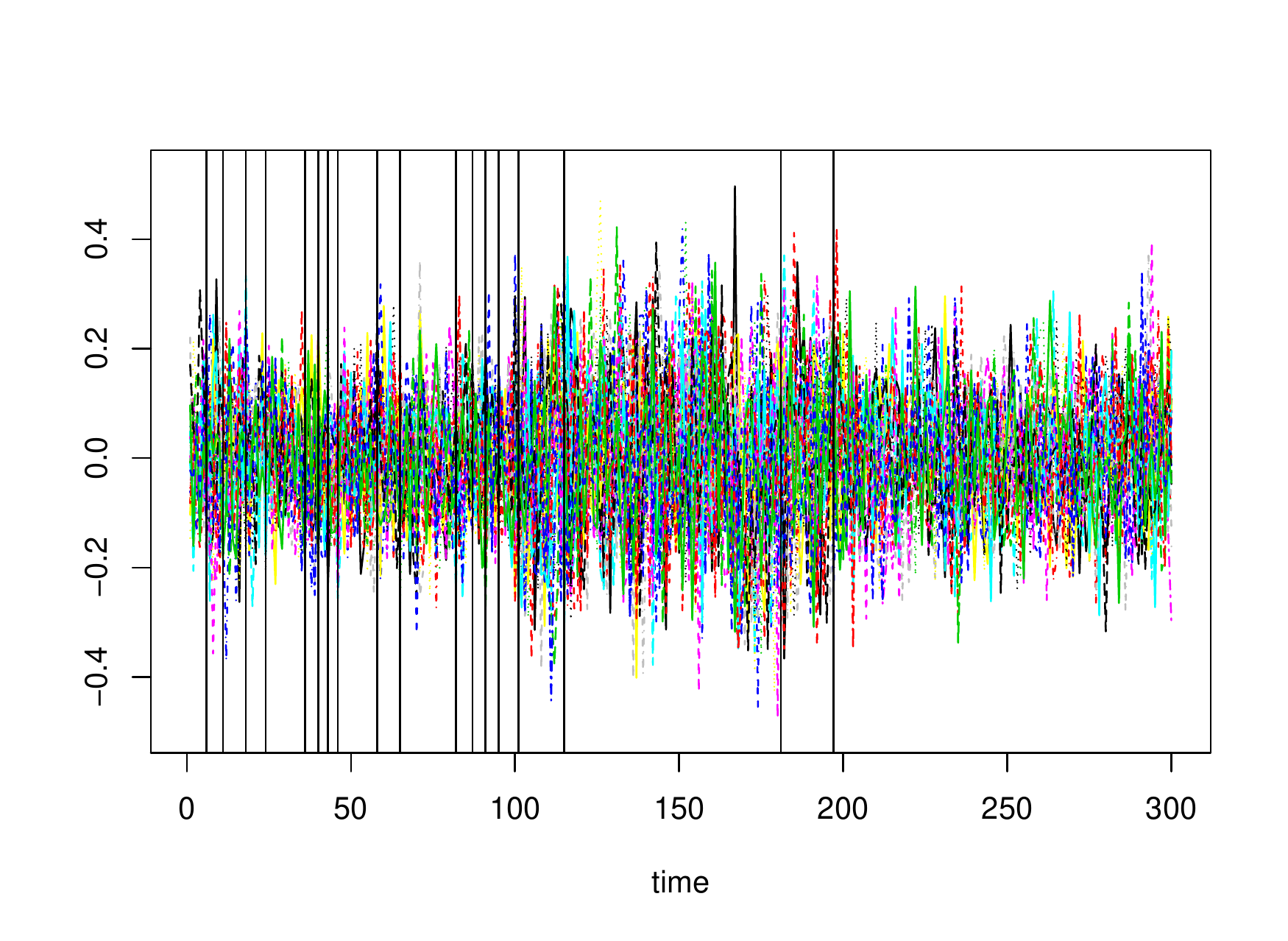}
\vspace{-0.5cm}
\caption{ Estimated break points on the first stage for one of the runs in simulation scenario 1: Close to 18 points are selected in the first stage.}
\label{fig_step_1}
\end{center}
\end{figure}

\vspace{-0.5cm}

\begin{figure}[H]
\begin{center}
\includegraphics[height=0.3\textheight]{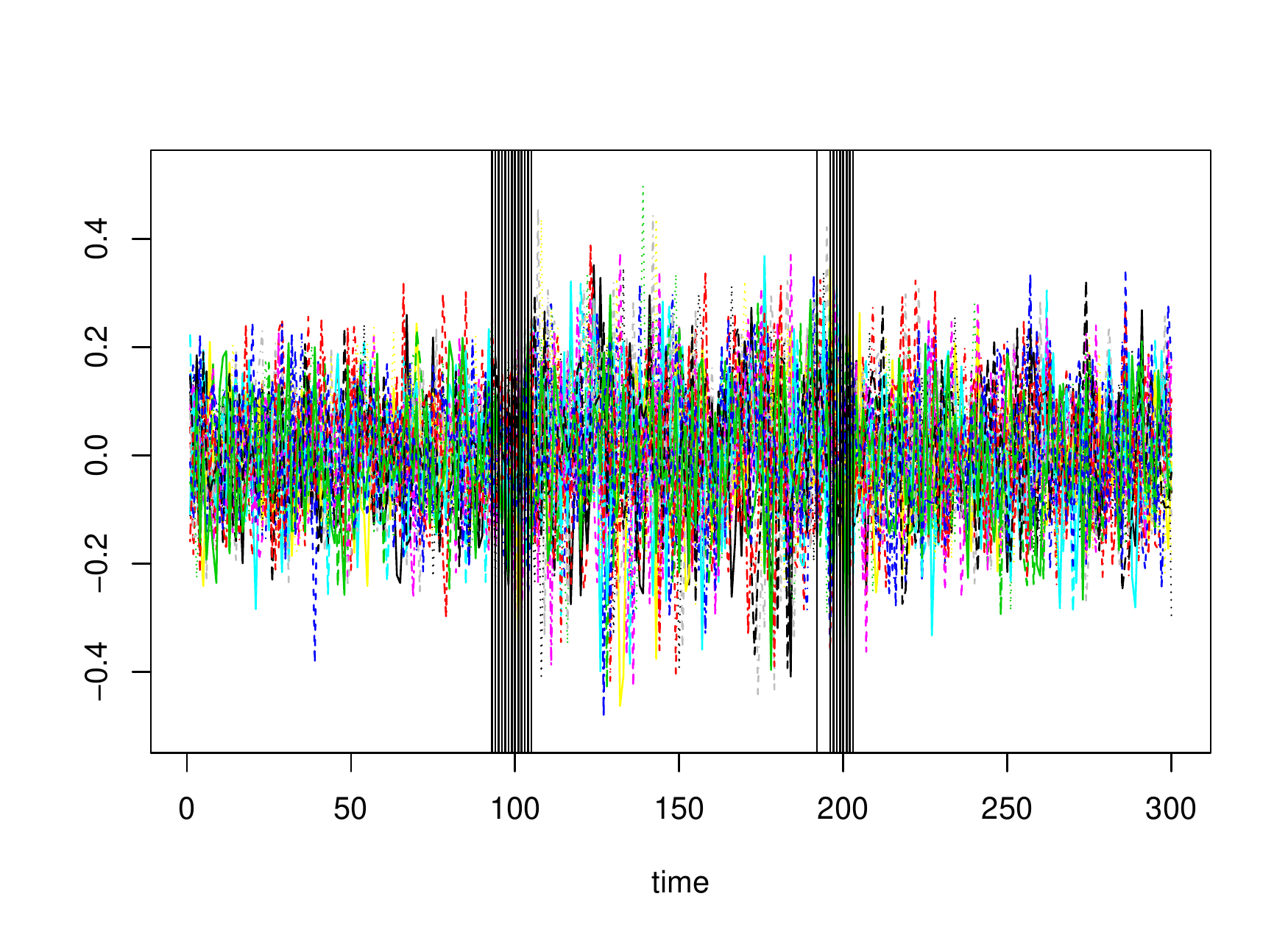}
\vspace{-0.5cm}
\caption{Final selected points for all the 100 runs from simulation scenario 1.}\label{fig_step_2}
\end{center}
\end{figure}

\begin{table}[ht]
\centering
\begin{tabular}{lllll} 
  \hline
break points & truth & mean & std & selection rate \\ 
  \hline
    \hline
  1 & 0.3333 & 0.3315 & 0.0074 & 1 \\ 
  2 & 0.6667 & 0.6632 & 0.0044 & 1 \\  
   \hline
\end{tabular}
\caption{Results of simulation scenario 1. The table shows mean and standard deviation of locations of selected break points, as well as the percentage of simulation runs where break points are correctly detected.}\label{table_sim_1}
\end{table}

\begin{figure}[H]
\begin{center}
\includegraphics[height=0.3\textheight]{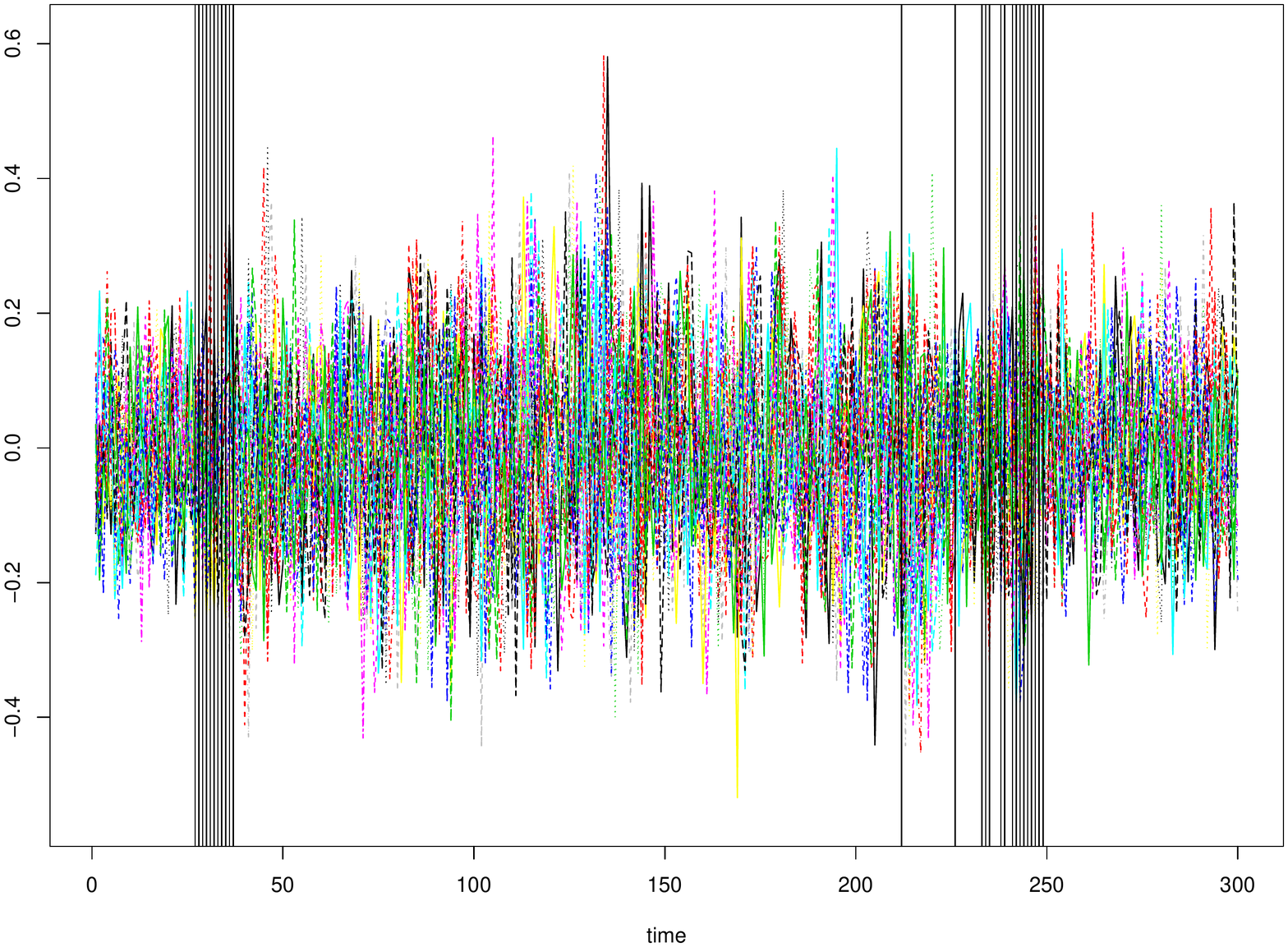}
\vspace{-0.5cm}
\caption{Final selected points for all the 100 runs from simulation scenario 2.}\label{fig_sim_3}
\end{center}
\end{figure}

%\vspace{-0.5cm}

\begin{table}[ht]
\centering
\begin{tabular}{lllll}
  \hline
break points & truth & mean & std & selection rate \\ 
  \hline
    \hline
  1 & 0.1 & 0.101 & 0.0082 & 0.98 \\ 
  2 & 0.8333 & 0.8134 & 0.0226 & 1 \\  
   \hline
\end{tabular}
\caption{Results of simulation scenario 2. The table shows mean and standard deviation of locations of selected break points, as well as the percentage of simulation runs where break points are correctly detected.}\label{table_sim_3}
\end{table}

\vspace{-0.5cm}

\begin{figure}[H]
\begin{center}
\includegraphics[height=0.3\textheight]{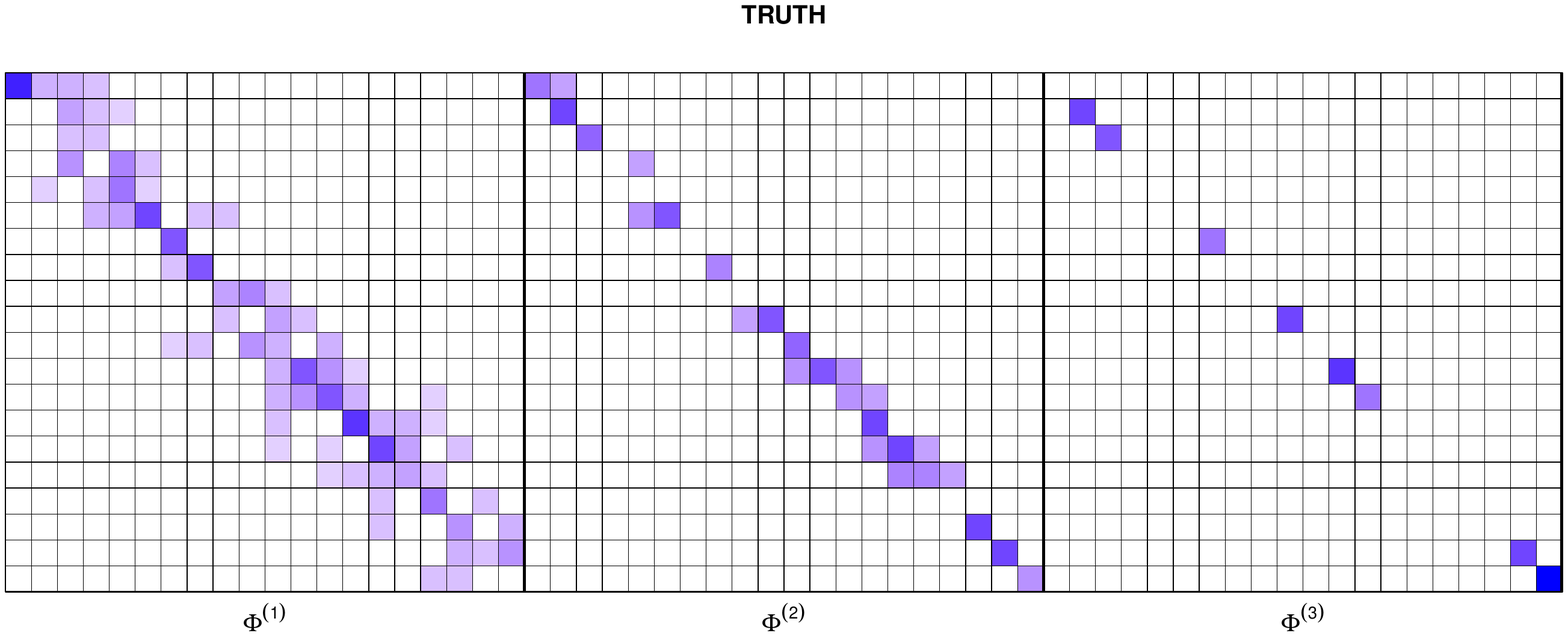}
\vspace{-1.5cm}
\caption{True autoregressive coefficients for the three segments used in the simulation scenario 3.}\label{fig_phi_2}
\end{center}
\end{figure}

\begin{figure}[H]
\begin{center}
\includegraphics[height=0.3\textheight]{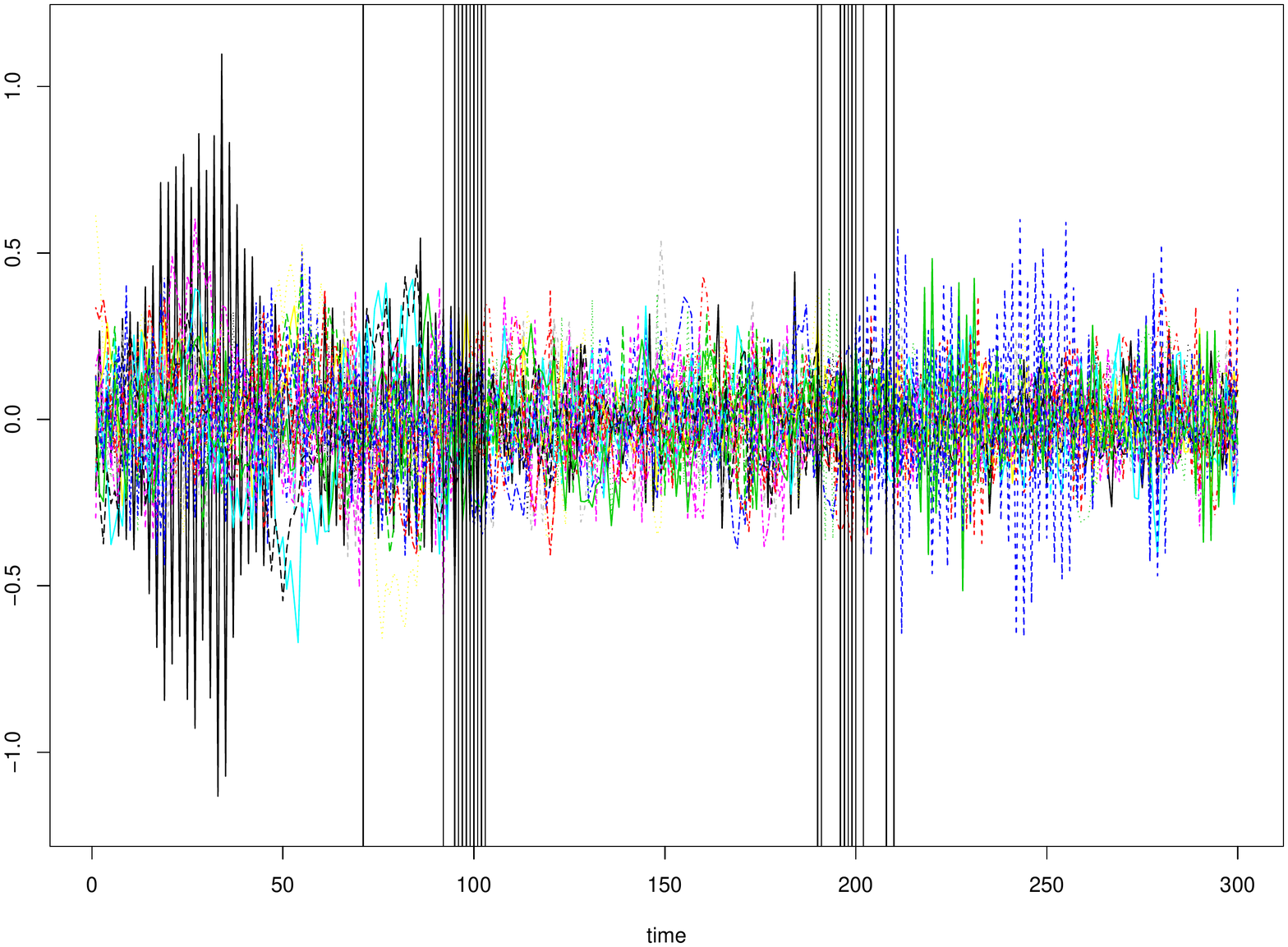}
\caption{Final selected points for all the 100 runs from simulation scenario 3.}\label{fig_step_sim_2}
\end{center}
\end{figure}

\begin{table}[ht]
\centering
\begin{tabular}{lllll}
  \hline
break points & truth & mean & std & selection rate \\ 
  \hline
    \hline
  1 & 0.3333 & 0.3282 & 0.0153 & 0.92 \\ 
  2 & 0.6667 & 0.6601 & 0.01 & 0.98 \\  
   \hline
\end{tabular}
\caption{Results of simulation scenario 3. The table shows mean and standard deviation of locations of selected break points, as well as the percentage of simulation runs where break points are correctly detected.}\label{table_sim_2}
\end{table}

%%%%%%%%%%%%%%%%
\section{Real Data Applications}\label{sec:data}
In this section, we apply the proposed model to two real data sets in order to illustrate its performance in detecting break points in different settings. %The first data set, which was the main motivation for developing the proposed method, was previously analyzed by  \cite{Chan_2014}. 

\subsection{EEG Data}\label{sec:EEG}
The data considered in this application consists of electroencephalogram (EEG) signals recorded at 18 locations on the scalp of a patient diagnosed with left temporal lobe epilepsy during an epileptic seizure. The sampling rate is 100 Hz and the total number of time points per EEG is $ T =32,768 $ over 238 seconds. The time series for all 18 EEG channels are shown in Figure~\ref{fig_EEG_full}. The seizure was estimated to take place at $ t = 185 s $. Examining the EEG plots, it can be seen that the magnitude and the volatility of signals change simultaneously around that time. 

To speed up the computations in this analysis, we selected one observation per second and reduced the total time points to $ T = 328 $. The EEG from a specific channel (P3) was previously used in \cite{Davis_2006} and \cite{Chan_2014}. Table~\ref{table_EEG} shows the location of the selected break points using the Auto--PARM method of \cite{Davis_2006}, the two-stage procedure of  \cite{Chan_2014} based on data from channel P3, and our proposed multivariate method. Our method correctly detects a break point at at $ t = 186 $, which is close to the seizure time identified by neurologists. The majority of other  selected break points by our method are close to the break points detected by the two univariate approaches. 
%\textcolor{red}{this is for later, but we need to report the break points by the SLEX method and also compare our result with theirs -- that is the only other multivariate method and we need to compare our results with them}
%\textcolor{blue}{I actually wanted to do this, but they have truncated the data and use only 83 seconds of it and I wasn't sure from which point they did the truncation. I will take care of it for the final submission.}\textcolor{green}{ok, thanks}

\begin{figure}[H]
\begin{center}
\includegraphics[width=0.6\linewidth]{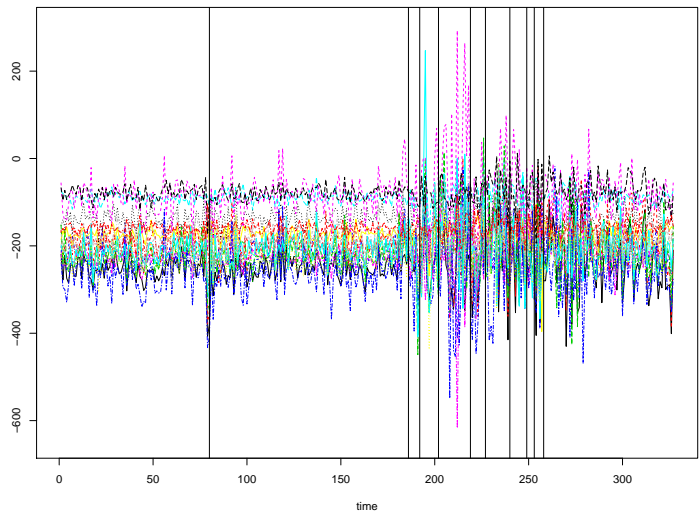}
\vspace{-0.5cm}
\caption{EEG data over 328 seconds with the 10 selected break points.}\label{fig_EEG_selected}
\end{center}
\end{figure}

\begin{table}[ht]
\centering
\begin{tabular}{llllllllllll}
\hline
Methods & 1 & 2 & 3 & 4 & 5 & 6 & 7 & 8 & 9 & 10 & 11 \\ 
  \hline
  \hline
  Auto--PARM & 186 & 190 & 206 & 221 & 233 & 249 & 262 & 275 & 306 & 308 & 326 \\ 
   Chan (2014) & 184 & 206 & 220 & 234 & 255 & 277 & 306 & 325 & -- & -- & -- \\ 
  Our method & 80 & 186 & 192 & 202 & 219 & 227 & 240 & 249 & 253 & 258 & --  \\ 
   \hline
\end{tabular}
\caption{Location of break points detected in the EEG data using three different methods. The locations are rounded to the closest integer.}\label{table_EEG}
\end{table}

\subsection{Yellow Cab Demand in NYC}\label{sec:taxi}

As a second example, we apply our method to the yellow cab demand data in  NYC. Here, the number of yellow cab pickups are aggregated spatially over the zipcodes and temporally over 15 minute intervals during April 16th, 2014. We only consider the zipcodes with more than 50 cab calls to obtain a better approximation using normal distribution. This results in 39 time series for zipcodes observed over 96 time points. To identify structural break points, we consider a differenced version of the data to remove first order non-stationarities. Table~\ref{table_Taxi} shows the 10 break points detected for this data; the differenced time series and the detected break points are also shown in Figure~\ref{fig_Taxi_selected}. 

Based on data from New York City metro (MTA), morning rush hour traffic in the city occurs between 6:30 AM and 9:30 AM, whereas the afternoon rush hour starts from 3:30 PM. Interestingly, among the selected break points, there are very  close to the rush hour start/end dates during a typical day. Specifically, the selected break points at 7 AM, 10 AM, 3:30 PM, and 6 PM are close to rush hour periods in NYC. These results suggest that the covariance structure of cab demands between the zipcodes in NYC may significantly change before and after the rush hour periods. 

\begin{figure}[H]
\begin{center}
\includegraphics[width=0.6\linewidth]{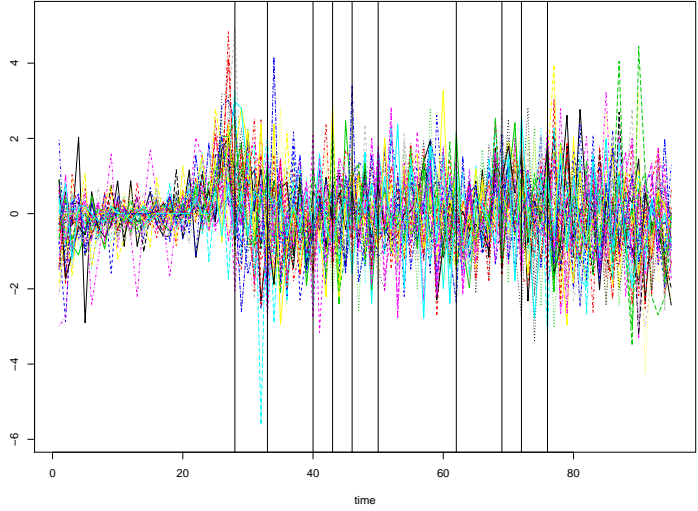}
\vspace{-0.5cm}
\caption{Plot of the NYC Yellow Cab Demand differenced time series from 39 different zipcodes over a single day with 96 time points; the 10 selected break points by the proposed method are shown as vertical lines.}\label{fig_Taxi_selected}
\end{center}
\end{figure}

\begin{table}[ht]
\centering
\begin{tabular}{llllllllllll}
\hline
  & 1 & 2 & 3 & 4 & 5 & 6 & 7 & 8 & 9 & 10  \\ 
  \hline
  \hline
  Our method & 7am & 8:15am & 10am & 10:45am & 11:30am & 12:30am & 3:30pm & 5:15pm & 6pm & 7pm  \\ 
   \hline
\end{tabular}
\caption{The location of break points for the NYC Yellow Cab Demand data.}\label{table_Taxi}
\end{table}

%%%%%%%%%%%
\section{Discussion}\label{sec:disc}

In this article, we developed a two-stage method for detecting structural break point in high-dimensional piecewise stationary VAR models. A block coordinate descent algorithm was developed to implement the proposed method efficiently. 

We showed that the proposed method consistently detects the total number of the break points, as well as their locations. Numerical experiments through in three simulation settings and two real data applications corroborate these theoretical  findings. In particular, in both real data sets, the break points detected using the proposed method are in agreement with the nature of the data sets. 

When the total number of break points $ m_0 $ is finite, the rate of consistency for detecting break point locations relative to the sample size $ T $ is affected by three factors: (1) the number of time points $ T $, (2) the number of time series observed $ p $, (3) the total sparsity of the model $ d_n^\star $. For the univariate case, this rate was shown to be of order $ ({\log n})/{n} $ by \cite{Chan_2014}. In the high-dimensional case, the rate is shown here to be of order $ ({d_n^\star \log n \log p })/{n} $. This rate puts an upper bound on the number of time series observed and the total sparsity in the model in the high-dimensional setting. Moreover, the proposed procedure allows for the number of break points to increase with the sample size, as long as the minimum distance between consecutive break points is large enough (Assumptions $ A3 $ and $ A4 $ connect the consistency rate of break point detection with the minimum distance between consecutive break points). Extending the methodology and theory in this paper to high-dimensional threshold autoregressive (TAR) models \citep{Tsay_1989} can offer an interesting direction of future research.

%%%%%%%%%%%
\section*{Appendix}\label{sec:appendix}
This section collects the technical lemmas, as well as the proofs of the main results in the paper. 

%%%%%%%%%%%
\subsection*{Appendix~A: Technical Lemmas}

\begin{lemma}\label{lemma_first}
There exist constants $ c_i > 0 $ such that for $ n \geq c_0 \left( \log(n) + 2\log(p) + \log(d)  \right) $, with probability at least $ 1 - c_1 \exp \left( - c_2 \left( \log(n) + 2\log(p) + \log(d)  \right) \right) $, we have 
\begin{equation}
\left|\left|  \frac{Z^\prime E }{n}  \right|\right|_\infty \leq c_3 \sqrt{\frac{\log(n) + 2\log(p) + \log(d)}{n}}
\end{equation}
\end{lemma}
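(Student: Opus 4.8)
\textbf{Proof plan for Lemma~\ref{lemma_first}.}
Write $q=np^2d$, so that $\log(n)+2\log(p)+\log(d)=\log q$ is exactly the logarithm of the ambient dimension of the vector $Z'E$. The strategy is the usual one: prove a sub-exponential tail bound for a single coordinate of $Z'E$ and then take a union bound over all $q$ coordinates. The hypothesis $n\ge c_0\log q$ will enter only to guarantee that, at the scale of interest, each coordinate lies in the ``sub-Gaussian'' branch of the concentration inequality used.

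First I would identify a generic coordinate. Using $Z=I_p\otimes\mathcal X$ together with $(I_p\otimes\mathcal X')\,\mathrm{vec}(\varepsilon(n))=\mathrm{vec}(\mathcal X'\varepsilon(n))$ and the lower-triangular block pattern of $\mathcal X$ in \eqref{regression}, every coordinate of $Z'E$ has the form
$$
\xi:=\sum_{t=s}^{T} y^{(a)}_{t-\ell}\,\varepsilon^{(b)}_t
$$
for some lag $\ell\in\{1,\dots,d\}$, indices $a,b\in\{1,\dots,p\}$ and left endpoint $s\in\{d,\dots,T\}$, so it is a sum of at most $n$ terms. Since $y^{(a)}_{t-\ell}$ is measurable with respect to $\sigma(\varepsilon_r:r\le t-\ell)$ while $\varepsilon^{(b)}_t$ is mean-zero and independent of that $\sigma$-field ($\ell\ge 1$), the partial sums of $\xi$ form a mean-zero martingale. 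Moreover the vector $w=(y^{(a)}_{s-\ell},\dots,y^{(a)}_{T-\ell},\varepsilon^{(b)}_s,\dots,\varepsilon^{(b)}_T)'$ is jointly Gaussian (every $y$ is a linear functional of the noise), hence $\xi=\tfrac12\,w'Qw$ for a fixed sparse matrix $Q$ whose nonzero entries merely pair $y^{(a)}_{t-\ell}$ with $\varepsilon^{(b)}_t$; writing $w=V^{1/2}g$ with $V=\mathrm{cov}(w)$ and $g\sim N(0,I)$ gives $\xi=\tfrac12\,g'\widetilde Q g$ with $\widetilde Q=V^{1/2}QV^{1/2}$. Because $t$ never equals $t-\ell$, $Q$ has zero diagonal, so $\mathbb E[\xi]=\tfrac12\,\mathrm{tr}(\widetilde Q)=0$.

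The analytic heart of the argument is then to show $\|\widetilde Q\|_{\mathrm{op}}\le C$ and $\|\widetilde Q\|_F^2\le C'n$, with $C,C'$ depending only on $d$, $\Lambda_{\max}(\Sigma_\varepsilon)$ and $\max_{1\le j\le m_0+1}\mathcal M(f_j)$. Indeed $\|\widetilde Q\|_{\mathrm{op}}\le\|V\|_{\mathrm{op}}\|Q\|_{\mathrm{op}}$ and $\|\widetilde Q\|_F\le\|V\|_{\mathrm{op}}\|Q\|_F$; here $Q$ is a fixed shift-type matrix with $\|Q\|_{\mathrm{op}}=O(1)$ and $\|Q\|_F^2=O(n)$ (it has $O(n)$ nonzero entries of size $O(1)$), while $\|V\|_{\mathrm{op}}$ is bounded because $V$ is a submatrix of the covariance operator of the piecewise-stationary process: on each stationary stretch this is a Toeplitz block whose symbol is dominated by $\mathcal M(f_j)$, and the corrections across break points decay geometrically since Assumption~A1 forces each segment VAR to be stable with a uniform margin. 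Granting these two bounds, the Hanson--Wright inequality yields, for all $u>0$,
$$
\mathbb P\big(|\xi|>u\big)\le 2\exp\!\Big(-c_\star\min\big(u^2/(C'n),\;u/C\big)\Big).
$$
Taking $u=c_3\sqrt{n\log q}$ and using $n\ge c_0\log q$ (so that $\sqrt{n\log q}\ge\sqrt{c_0}\log q$, whence the first term dominates once $c_3$ is large enough), the right side is at most $2\exp(-c_\star c_3^2\log q/C')$. A union bound over the $q$ coordinates of $Z'E$ then gives
$$
\mathbb P\Big(\|Z'E\|_\infty>c_3\sqrt{n\log q}\Big)\le 2q\exp\!\big(-c_\star c_3^2\log q/C'\big)\le 2\exp\!\big(-c_2\log q\big)
$$
for $c_3$ chosen so that $c_2:=c_\star c_3^2/C'-1>0$. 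Dividing through by $n$ gives $\|Z'E/n\|_\infty\le c_3\sqrt{(\log n+2\log p+\log d)/n}$ on this event, which is the claimed bound (with $c_1=2$ and $c_3$ the constant in the statement).

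\textbf{Main obstacle.} The substantive step is the uniform bound $\|V\|_{\mathrm{op}}=O(1)$ for the covariance of the \emph{piecewise} process. On a single segment this is the standard Toeplitz estimate through the spectral-density symbol, but near a break point $y_{t-\ell}$ mixes the dynamics of two adjacent segments with the initial condition carried over from the earlier one, and one must verify that those contributions remain summable at a rate not depending on $m_0$ --- otherwise a spurious $\log(m_0+1)$ would leak into the exponent. A minor companion point is making the infinite moving-average representation of each $y_t$ rigorous (truncation plus a geometrically small remainder). As an alternative to Hanson--Wright one could instead apply a Bernstein-type martingale inequality to the partial sums of $\xi$, but this would still require a high-probability bound on the predictable quadratic variation $\sum_t(y^{(a)}_{t-\ell})^2$, which reduces to the same covariance control supplied by Assumption~A1.
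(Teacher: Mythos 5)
Your proposal is correct and follows essentially the same route as the paper: reduce $\|Z'E/n\|_\infty$ to the $q=np^2d$ scalar cross-products $\sum_t y^{(a)}_{t-\ell}\varepsilon^{(b)}_t$, obtain a per-coordinate tail bound of the form $\exp(-cn\min(\eta,\eta^2))$ via concentration for Gaussian quadratic forms, and union bound; the paper simply imports that tail bound by citing Proposition~2.4(b) of Basu and Michailidis (2015), whereas you derive it explicitly from Hanson--Wright. The one substantive point you raise --- that the operator-norm bound on the covariance $V$ must be verified for the \emph{piecewise} stationary process near break points, not just segmentwise via $\mathcal M(f_j)$ --- is a genuine gap that the paper's citation also glosses over, so flagging it is appropriate rather than a defect of your argument.
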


\begin{proof}
Note that $ \frac{1}{n} Z^\prime E = \frac{1}{n} (I_p \otimes \mathcal{X}^\prime) E = \mbox{vec} (\mathcal{X}^\prime \varepsilon(n))/n $. Let $ \mathcal{X}(h,.) $ and $ \mathcal{X}(h,l) $ be the $ h$-th block column and the $ l$-th column of the $ h$-th block column of $ \mathcal{X} $, respectively, $ 1 \leq h \leq n $, $ 1 \leq l \leq d $. More specifically, 

\begin{equation}
\mathcal{X}(h,.) = \begin{pmatrix}  & 0 & \\ & \vdots & \\ & 0 & \\  y_{d+h-2}^\prime & \ldots & y_{h-1}^\prime \\ & \vdots & \\  y_{T-1}^\prime & \ldots & y_{T-d}^\prime \end{pmatrix}_{n \times pd},  \hspace{1cm} \mathcal{X}(h,l) = \begin{pmatrix} 0 \\  \vdots \\ 0 \\  y_{d+h-l-1}^\prime \\  \vdots  \\ y_{T-l}^\prime \end{pmatrix}_{n \times p}. 
\end{equation} 
Now, 
\begin{equation}
\left|\left|  \frac{Z^\prime E }{n}  \right|\right|_\infty = \max_{1 \leq h \leq n, 1 \leq l \leq d, 1 \leq i,j \leq p} \left|  e_i^\prime \left( \frac{\mathcal{X}^\prime (h,l) \varepsilon(n) }{n}  \right) e_j  \right|,
\end{equation}
where $ e_i \in \mathbb{R}^p $ with the $ i$-th element equals to 1 and zero on the rest. Note that,

$$  \frac{\mathcal{X}^\prime (h,l) \varepsilon(n) }{n} = \frac{1}{n} \sum_{j=h-l-1}^{T-d-l} y_{d+j} \varepsilon^\prime_{d+j+l}. $$

Now. sine $ \mbox{cov} (y_{d+j},  \varepsilon_{d+j+l}) = 0 $ for all $ j, l, h $, similar argument as in proposition 2.4 (b) of (\cite{Basu_2015}) shows that for fixed $ i, j, h, l $, there exist $ k_1, k_2 > 0 $ such that for all $ \eta > 0 $:

$$ \mathbb{P} \left( \left|  e_i^\prime \left( \frac{\mathcal{X}^\prime (h,l) \varepsilon(n) }{n}  \right) e_j  \right| > k_1 \eta  \right) \leq 6 \exp \left(  - k_2 n \min (\eta, \eta^2)  \right). $$

Set $ \eta = k_3  \sqrt{\frac{\log(n) + 2\log(p) + \log(d)}{n}} $ for a large enough $ k_3 > 0 $, and taking the union over the $ q = n p^2 d $ possible choices of $ i, j, h, l $ yield the result.  
\end{proof}

\begin{lemma}\label{lemma_KKT}
Let $ \widehat{\Theta} $ be defined as in (\ref{eq_estimation}), then under the assumptions of theorem (\ref{thm_pred_error}):
\begin{equation}
\sum_{l = \widehat{t}_j}^{n} Y_{l-1} \left( y_l^\prime - Y_{l-1}^\prime \sum_{i=1}^{l} \widehat{\theta}_i^\prime \right) = \frac{n \lambda_n}{2} \mbox{sign} (\widehat{\theta}_{\widehat{t}_j}^\prime), \hspace{1cm} \mbox{for} \,\,\,  j = 1, 2, ..., \widehat{m},
\end{equation}
where $ Y_l^\prime = \left( y_l^\prime \ldots y_{l-d+1}^\prime  \right)_{1 \times pd} $, and 
\begin{equation}
\left|\left|  \sum_{l = j}^{n} Y_{l-1} \left( y_l^\prime - Y_{l-1}^\prime \sum_{i=1}^{l} \widehat{\theta}_i^\prime \right) \right|\right|_\infty \leq \frac{n \lambda_n}{2}, \hspace{1cm} \mbox{for} \,\,\,  j = d-1, 2, ..., n.
\end{equation}
Moreover, $ \sum_{i=1}^{t} \widehat{\theta}_i = \widehat{\Phi}^{(.,j)} $ for $ \widehat{t}_{j-1} \leq t \leq \widehat{t}_j - 1$, $ j = 1, 2, ..., |  \mathcal{A}_n| $. 
\end{lemma}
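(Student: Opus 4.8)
The plan is to derive all three assertions from the first-order (subgradient) optimality conditions of the convex program \eqref{eq_estimation}. First I would pass to matrix form: since $Y=\mathrm{vec}(\mathcal{Y})$, $Z=I_p\otimes\mathcal{X}$, and $\Theta$ is the matching vectorization of $\theta(n)=(\theta_1^\prime,\dots,\theta_n^\prime)^\prime$, we have $\|Y-Z\Theta\|_2^2=\|\mathcal{Y}-\mathcal{X}\theta(n)\|_F^2$, so \eqref{eq_estimation} is equivalent to minimizing the convex function
$$
F(\theta_1,\dots,\theta_n)=\frac1n\big\|\mathcal{Y}-\mathcal{X}\theta(n)\big\|_F^2+\lambda_n\sum_{i=1}^n\|\theta_i\|_1 ,
$$
which is coercive (it dominates $\lambda_n\sum_i\|\theta_i\|_1$) and hence attains its infimum. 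Any $\widehat\Theta$ solving \eqref{eq_estimation} therefore satisfies $0\in\partial_{\theta_i}F(\widehat\theta)$ for every block $i=1,\dots,n$.

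Next I would write out this block condition. The $i$-th block column of $\mathcal{X}$, call it $\mathcal{X}(i,\cdot)$, carries the row vector $Y_{l-1}^\prime$ exactly in the rows $l$ at which block $i$ is active (by the display defining \eqref{regression}, the rows with $l\ge i$ in the relabelling used there) and is zero elsewhere; moreover the $l$-th row of $\mathcal{X}\theta(n)$ equals $Y_{l-1}^\prime\sum_{k=1}^{l}\theta_k^\prime$. Hence the gradient of the smooth part with respect to $\theta_i^\prime$ is
$$
-\frac2n\,\mathcal{X}(i,\cdot)^\prime\big(\mathcal{Y}-\mathcal{X}\theta(n)\big)=-\frac2n\sum_{l\ge i}Y_{l-1}\Big(y_l^\prime-Y_{l-1}^\prime\sum_{k=1}^{l}\widehat\theta_k^\prime\Big).
$$
The subdifferential of $\theta_i\mapsto\|\theta_i\|_1$ is the set of matrices $G_i$ whose entries equal the sign of the corresponding entry of $\widehat\theta_i$ on its support and lie in $[-1,1]$ off it; in particular $\|G_i\|_\infty\le1$ always. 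Cancelling the factor $2/n$, stationarity reads
$$
\sum_{l\ge i}Y_{l-1}\Big(y_l^\prime-Y_{l-1}^\prime\sum_{k=1}^{l}\widehat\theta_k^\prime\Big)=\frac{n\lambda_n}{2}\,G_i,\qquad G_i\in\partial\|\widehat\theta_i\|_1 .
$$

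The three claims then follow. Taking $i=\widehat t_j\in\mathcal{A}_n$, so that $\widehat\theta_{\widehat t_j}\neq0$, forces $G_{\widehat t_j}=\mathrm{sign}(\widehat\theta_{\widehat t_j}^\prime)$ on the support, which is the first displayed identity of the lemma; taking $\ell_\infty$ norms in the general stationarity identity and using $\|G_i\|_\infty\le1$ gives the stated sup-norm bound for every admissible index (with $\widehat\theta_k=0$ for $k\le d-1$ accounting for the index range). The last assertion is the telescoping already used above: since $\widehat\theta_k=0$ for every $k$ strictly between consecutive elements of $\mathcal{A}_n$, the partial sum $\sum_{k=1}^{t}\widehat\theta_k$ is constant on $\widehat t_{j-1}\le t\le\widehat t_j-1$ and, by the segment-wise reconstruction defining $\widehat\Phi^{(.,j)}$, equals $\widehat\Phi^{(.,j)}$ there. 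I expect the only genuinely delicate point to be bookkeeping rather than analysis: correctly matching the block-column support of $\mathcal{X}$ and the shift between the time index and the block index to the summation ranges $\sum_{l\ge i}$ appearing in the stated identities, and keeping the $\mathrm{vec}$/matrix correspondence consistent so that $F$ is differentiated block-wise as intended.
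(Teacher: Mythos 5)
Your proposal is correct and takes exactly the route the paper intends: the paper's own proof is the single sentence ``this is just checking the KKT condition,'' and you have filled in precisely that computation (block-wise stationarity of the convex objective, the subdifferential of the $\ell_1$ term with $\|G_i\|_\infty\le 1$, and the telescoping of the partial sums $\sum_{k\le t}\widehat\theta_k$ between consecutive elements of $\mathcal{A}_n$). Your caveat that the sign identity holds entrywise only on the support of $\widehat\theta_{\widehat t_j}$, and that the remaining work is index bookkeeping between time index and block index, is an accurate reading of the (somewhat loosely stated) lemma.
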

\begin{proof}
This is just checking the KKT condition of the proposed optimization problem.
\end{proof}

\begin{lemma}\label{lemma_bound}
Under assumption A1, there exist constants $ c_i > 0 $ such that with probability at least $ 1 - c_1 \exp(-c_2 (\log(d) + 2 \log(p))  ) $, 

\begin{equation}
\sup_{1 \leq i \leq m_0, s \geq t_i, |  t_i - s  |  > n \gamma_n  }   \left|\left|  {(t_i - s)}^{-1} \left( \sum_{l=s}^{t_i - 1} Y_{l-1} Y_{l-1}^\prime - \Gamma_i^d (0) \right)  \right|\right|_\infty \leq c_3 \sqrt{\frac{\log(d) + 2 \log(p)}{n \gamma_n}},
\end{equation}
where $ \Gamma_i^d (0) = \mathbb{E} (Y_{l-1} Y_{l-1}^\prime) $, and 
\begin{equation}
\sup_{1 \leq i \leq m_0, s \geq t_i, |  t_i - s  |  > n \gamma_n  }  \left|\left|  {(t_i - s)}^{-1} \sum_{l =  s }^{t_i - 1} Y_{l-1} \varepsilon_l^\prime  \right|\right|_\infty \leq c_3 \sqrt{\frac{\log(d) + 2 \log(p)}{n \gamma_n}}.
\end{equation}

\end{lemma}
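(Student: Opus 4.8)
The plan is to reduce both inequalities to concentration bounds for partial sums of a \emph{single} stationary Gaussian VAR, and then to handle the two suprema --- over segments $i$ and over starting points $s$ --- by a union bound in which the contribution of long windows is killed off by peeling over window lengths. This parallels (and localizes) the argument already used in Lemma~\ref{lemma_first}.

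First, fix $i$ and fix $s$ with $N := |t_i - s| > n\gamma_n$, so that the block of indices entering $\sum_{l=s}^{t_i-1}$ lies inside the $i$-th stationarity regime and the lag-vectors $\{Y_{l-1}\}$ there form a finite window of the stationary Gaussian process $y_t^{(i)}$ of Assumption~A1. For this fixed block, the $(a,b)$ entry of $\sum_l (Y_{l-1}Y_{l-1}' - \Gamma_i^d(0))$ is a centered quadratic form in jointly Gaussian variables, while the $(a,b)$ entry of $\sum_l Y_{l-1}\varepsilon_l'$ is a sum of products $(Y_{l-1})_a(\varepsilon_l)_b$ with $\varepsilon_l$ independent of $Y_{l-1}$ and $\mathrm{cov}(Y_{l-1},\varepsilon_l)=0$. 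These are exactly the settings of parts (a) and (b) of Proposition~2.4 of \cite{Basu_2015}, giving, for constants $k_1,k_2>0$ and every $\eta>0$,
$$ \mathbb{P}\left( \left| \frac{1}{N}\sum_{l} (\cdot) \right| > k_1 \eta \right) \le 6 \exp\left( -k_2 N \min(\eta, \eta^2) \right), $$
where --- crucially --- $k_1,k_2$ depend on the process only through $\mathcal{M}(f_i)$ and $\mathbf{m}(f_i)$, so by Assumption~A1 they may be taken uniform in $i$. (The burn-in: the window $\{s,\dots,t_i-1\}$ stays inside regime $i$ because $|t_i-s|>n\gamma_n$ together with the minimum-spacing condition keeps it away from neighbouring break points, and the $O(d)$ indices near the left endpoint whose lag-vectors mix two regimes contribute only a lower-order term.)

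Next I would set $\eta = c_3\sqrt{(\log d + 2\log p)/(n\gamma_n)}$. Since $N>n\gamma_n$ and, in the regime where the bound is meaningful, $\log p/(n\gamma_n)\to 0$ so that $\eta\le 1$ eventually, the exponent above is at most $-k_2 c_3^2 (N/(n\gamma_n))(\log d + 2\log p)$, which decays exponentially in $N/(n\gamma_n)$ and in $\log(p^2 d)$ simultaneously. A union bound over the at most $(pd)^2$ matrix entries, the at most $m_0$ values of $i$, and, for each fixed window length $N$, the at most $n$ admissible values of $s$, followed by summation of the resulting (geometric in $N$) series over $N=\lceil n\gamma_n\rceil,\lceil n\gamma_n\rceil+1,\dots$, shows that the probability that either supremum exceeds $c_3\sqrt{(\log d + 2\log p)/(n\gamma_n)}$ is at most $c_1\exp(-c_2(\log d + 2\log p))$ for $c_3$ taken large enough; the dominant term is contributed by the shortest windows, $N\approx n\gamma_n$.

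The main obstacle --- and the reason the peeling step is needed rather than a crude union over all $O(m_0 n)$ windows --- is precisely the supremum over $s$: a flat union bound would inflate the deviation rate by a $\log n$ inside the square root, which is absent from the claimed bound. The resolution is that a window of length $N$ concentrates at the faster rate $\sqrt{1/N}$ rather than $\sqrt{1/(n\gamma_n)}$, so the per-window failure probabilities decay geometrically in $N/(n\gamma_n)$ and the sum over all window lengths is, up to a constant, governed by the minimal admissible length $n\gamma_n$; the leftover logarithmic factors from the union over $s$ (and over $i$) are then absorbed because $\log d + 2\log p$ dominates $\log n$ in the high-dimensional regime --- and likewise in the polynomial-$p$ setting of Remark~2. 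Everything else is the constant bookkeeping already carried out in Lemma~\ref{lemma_first}.
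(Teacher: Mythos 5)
Your proposal is correct and takes essentially the same route as the paper: per-entry concentration for the Gram matrix and the cross term via Proposition~2.4 of \cite{Basu_2015}, the choice $\eta \asymp \sqrt{(\log d + 2\log p)/(n\gamma_n)}$ exploiting $|t_i-s|>n\gamma_n$, and a union bound over the $O(p^2d^2)$ matrix entries. The one place you go beyond the paper is the supremum over $(i,s)$: the paper's proof unions only over matrix entries and simply asserts that the constants can be chosen "independent of the break point," whereas you make this uniformity explicit by peeling over window lengths $N$ and absorbing the residual $\log n$ factor from the union over starting points into $\log d + 2\log p$ using the high-dimensional scaling (or Remark~2's polynomial regime) --- a welcome tightening of a step the paper leaves implicit, along with your correct observation that the $O(d)$ lag vectors near the left endpoint straddling two regimes contribute only lower-order terms.
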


\begin{proof} 
The proof of this lemma is similar to proposition 2.4 in \cite{Basu_2015}. Here we briefly mention the proof omitting the details. For the first one, note that using similar argument as in proposition 2.4 (a) in \cite{Basu_2015}, there exist $ k_1, k_2 > 0 $ such that for each fixed $ i, j = 1, ..., pd $,
\begin{equation}
\mathbb{P} \left( \left|  e_i^\prime  \frac{ \sum_{l=s}^{t_i - 1} Y_{l-1} Y_{l-1}^\prime - \Gamma_i^d (0) }{t_i - s} e_j  \right| > k_1 \eta \right) \leq 6 \exp (-k_2 n \gamma_n \min(\eta, \eta^2) ).
\end{equation}
Setting $ \eta = k_3 \sqrt{\frac{\log(d p^2)}{n \gamma_n}} $, and taking the union over all possible values of $ i, j $, we get the first part. For the second part, the proof will be similar to lemma (\ref{lemma_first}). Again, there exist $ k_1, k_2 > 0 $ such that for each fixed $ i = 1, ..., pd $, $ j = 1, ..., p $,

\begin{equation}
\mathbb{P} \left( \left|  e_i^\prime  \frac{ \sum_{l =  s }^{t_i - 1} Y_{l-1} \varepsilon_l^\prime  }{t_i - s} e_j  \right| > k_1 \eta \right) \leq 6 \exp (-k_2 n \gamma_n \min(\eta, \eta^2) ).
\end{equation}

%\textcolor{red}{Is this part of the proof, or should the proof end before this?}
Setting $ \eta = k_3 \sqrt{\frac{\log(d p^2)}{n \gamma_n}} $, and taking the union over all possible values of $ i, j $, we get:

\begin{equation}
 \left|\left|  {(t_i - s)}^{-1} \left( \sum_{l=s}^{t_i - 1} Y_{l-1} Y_{l-1}^\prime - \Gamma_i^d (0) \right)  \right|\right|_\infty \leq c_3 \sqrt{\frac{\log(d) + 2 \log(p)}{n \gamma_n}},
\end{equation}
and 
\begin{equation}
 \left|\left|  {(t_i - s)}^{-1} \sum_{l =  s }^{t_i - 1} Y_{l-1} \varepsilon_l^\prime  \right|\right|_\infty \leq c_3 \sqrt{\frac{\log(d) + 2 \log(p)}{n \gamma_n}},
\end{equation}
with high probability converging to 1 for any $ i = 1, 2, ..., m_0 $, as long as $ |  t_i - s  |  > n \gamma_n $ and $ s \geq t_{i-1} $. 
Note that the constants $ c_1, c_2, c_3 $ can be chosen large enough and in such a way that the upper bounds above would be independent of the break point $ t_i $. Therefore, we have the desired upper bounds verified with probability at least $ 1 - c_1 \exp(-c_2 (\log(d) + 2 \log(p))  ) $. 
%\textcolor{red}{This becomes handy in the proof of next theorems. (this refers to Theorem~\ref{thm_known_m})}
\end{proof}

\begin{lemma}\label{lemma_selection}
Under the assumptions of theorem (\ref{thm_selection}), for $ m < m_0 $, there exists a constant $ c > 0 $ such that:

\begin{equation}\label{eq_lower_bound}
\mathbb{P} \left( \min_{(s_1, ..., s_m) \subset \lbrace 1, ..., T \rbrace} L_n(s_1, s_2, ..., s_m; \eta_n)  >  \sum_{t=d}^{T} || \varepsilon_t ||_2^2  + c \Delta_n - M_{\Phi} (m_0+1) n \gamma_n d_n^\star \right) \rightarrow 1,
\end{equation}
where $ \Delta_n =  \min_{1 \leq j \leq m_0+1} | t_j - t_{j-1} |  $. 
\end{lemma}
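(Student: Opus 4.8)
The plan is to exploit the fact that a partition with fewer than $m_0$ pieces cannot track all the regime changes: on at least one true segment boundary the fitted model is forced to use a single coefficient block where the truth jumps by at least $v$, which produces an irreducible prediction bias of order $\Delta_n$ (the term $c\Delta_n$ in \eqref{eq_lower_bound}), everything else being of smaller order. \emph{Step 1 (isolating an unresolved breakpoint).} Fix $m<m_0$ and a candidate vector $d=s_0<s_1<\cdots<s_m<s_{m+1}=T$. By A3, $\Delta_n/(n\gamma_n)\to\infty$, so for $n$ large the true breakpoints $t_1,\dots,t_{m_0}$ and the endpoints $d,T$ are pairwise separated by more than $\Delta_n>2n\gamma_n$. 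Each $s_k$ lies within $\Delta_n/2$ of at most one $t_j$, so the $m$ candidate points miss at least $m_0-m\ge1$ true breakpoints: there is a $j^\star$ with no $s_k$ in $(t_{j^\star}-\Delta_n/2,\,t_{j^\star}+\Delta_n/2)$ and with $t_{j^\star}$ at distance $\ge\Delta_n/2$ from $d$ and from $T$. If $[s_{k-1},s_k)$ is the fitted block containing $t_{j^\star}$, then $s_{k-1}\le t_{j^\star}-\Delta_n/2$ and $s_k\ge t_{j^\star}+\Delta_n/2$, so the windows $I_L=\{t_{j^\star}-\lfloor\Delta_n/2\rfloor,\dots,t_{j^\star}-1\}$ and $I_R=\{t_{j^\star},\dots,t_{j^\star}+\lfloor\Delta_n/2\rfloor-1\}$ lie inside $[s_{k-1},s_k)\cap[d,T]$, contain no true breakpoint other than $t_{j^\star}$, and on $I_L\cup I_R$ the model \eqref{eq_regression_second} uses a single block $\widehat\Phi^\flat:=\widehat\theta_{(s_{k-1},s_k)}$; moreover the truth uses $\Phi^{(.,j^\star)}$ on $I_L$ and $\Phi^{(.,j^\star+1)}$ on $I_R$.

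\emph{Step 2 (decomposition and bias lower bound).} Writing $b_t=\Phi^{(.,j(t))}Y_{t-1}-\widehat y_t$ for the prediction bias, with $j(t)$ the true regime and $\widehat y_t$ the fitted value, and discarding the nonnegative $\ell_1$ term in $L_n$,
\begin{equation*}
L_n(s_1,\dots,s_m;\eta_n)\ \ge\ \|\mathcal Y-\mathcal X_{s_1,\dots,s_m}\widehat\theta\|_F^2\ =\ \sum_{t=d}^T\|\varepsilon_t\|_2^2\ +\ \sum_{t=d}^T\|b_t\|_2^2\ +\ 2\sum_{t=d}^T\langle b_t,\varepsilon_t\rangle .
\end{equation*}
On $I_L$, $b_t=D_LY_{t-1}$ with $D_L=\Phi^{(.,j^\star)}-\widehat\Phi^\flat$, and on $I_R$, $b_t=D_RY_{t-1}$ with $D_R=\Phi^{(.,j^\star+1)}-\widehat\Phi^\flat$. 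Since $D_R-D_L=\Phi^{(.,j^\star+1)}-\Phi^{(.,j^\star)}$ has operator norm $\ge v$ by A2, $\max(\|D_L\|_2,\|D_R\|_2)\ge v/2$; assume, without loss of generality, $\|D_L\|_2\ge v/2$. Then $\sum_{t\in I_L}\|b_t\|_2^2=\mathrm{tr}\big(D_L^\top D_L\sum_{t\in I_L}Y_{t-1}Y_{t-1}^\top\big)$, and by Lemma~\ref{lemma_bound} applied to the interval $I_L$ (of length $>n\gamma_n$) together with A1,
\begin{equation*}
\sum_{t\in I_L}Y_{t-1}Y_{t-1}^\top\ =\ |I_L|\,\Gamma_{j^\star}^d(0)+R_L,\qquad \Gamma_{j^\star}^d(0)\succeq c_{\min}I,\qquad \|R_L\|_\infty\ \le\ |I_L|\,c_3\sqrt{\tfrac{\log(dp^2)}{n\gamma_n}},
\end{equation*}
with probability tending to $1$, uniformly over the admissible windows. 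Hence $\sum_{t\in I_L}\|b_t\|_2^2\ge c_{\min}|I_L|\,\|D_L\|_F^2-\|D_L\|_{1,1}^2\,\|R_L\|_\infty\ge c_{\min}|I_L|(v/2)^2-\|D_L\|_{1,1}^2\,\|R_L\|_\infty$, using $\|D_L\|_F\ge\|D_L\|_2$ and $|\mathrm{tr}(D_L^\top D_L R_L)|\le\|D_L\|_{1,1}^2\|R_L\|_\infty$. The first term is $\ge c\Delta_n$; the second is controlled using the $\ell_1$ penalty of \eqref{eq_estimation_second} (which bounds $\|\widehat\Phi^\flat\|_{1,1}$), A2 ($\max_j\|\Phi^{(.,j)}\|_\infty\le M_\Phi$), the definition of $d_n^\star$, and $\sqrt{\log(dp^2)/(n\gamma_n)}\to0$ from A3, so that $\sum_t\|b_t\|_2^2\ge c\Delta_n-\tfrac12 M_\Phi(m_0+1)n\gamma_n d_n^\star$.

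\emph{Step 3 (cross term and uniformity).} The cross term $2\sum_t\langle b_t,\varepsilon_t\rangle$ is bounded below by splitting into $I_L\cup I_R$, where $b_t=D_{\bullet}Y_{t-1}$ and $\sum_tY_{t-1}\varepsilon_t^\top$ is controlled by the second display of Lemma~\ref{lemma_bound}, and the complement, where $b_t=(\Phi^{(.,j(t))}-\widehat\theta_{(\cdot,\cdot)})Y_{t-1}$ for the relevant fitted block and the inner products are handled by Lemma~\ref{lemma_first} and the KKT identities of Lemma~\ref{lemma_KKT}; since $n\gamma_n d_n^\star=o(\Delta_n)$ under A3--A4, this contribution is, with probability $\to1$, dominated by the allotted budget $\tfrac12 M_\Phi(m_0+1)n\gamma_n d_n^\star$. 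Since the probabilistic inputs (Lemmas~\ref{lemma_first} and \ref{lemma_bound}) hold uniformly in $s$ and in the choice of $j^\star$, and there are only finitely many (at most $m_0$) values of $m<m_0$, a union bound over $m$ yields the event in \eqref{eq_lower_bound} with probability tending to $1$.

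\emph{Main obstacle.} The crux is Step 2: obtaining a lower bound on $\sum_{t\in I_L}\|D_LY_{t-1}\|_2^2$ that is uniform over all candidate partitions when $\widehat\Phi^\flat$ is data-dependent and only weakly sparse. A naive operator-norm concentration would require $pd\sqrt{\log p/(n\gamma_n)}\to0$, which is not assumed; one must instead retain the entrywise bound of Lemma~\ref{lemma_bound} and pay an $\|D_L\|_{1,1}^2$ factor, which is tamed by the $\ell_1$ penalty of \eqref{eq_estimation_second} and the total sparsity $d_n^\star$ and is precisely the origin of the subtracted term $M_\Phi(m_0+1)n\gamma_n d_n^\star$ in \eqref{eq_lower_bound}.
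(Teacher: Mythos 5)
Your proposal is correct and follows essentially the same route as the paper's proof: isolate a true break point unresolved by the $m$ candidates, decompose the residual sum of squares into noise, squared bias, and cross terms, lower-bound the quadratic bias on a window of length of order $\Delta_n$ via the minimum eigenvalue of $\Gamma_j^d(0)$ together with the gap condition $\|\Phi^{(.,j+1)}-\Phi^{(.,j)}\|_2\geq v$ from A2, and absorb the cross terms and concentration errors into the $M_\Phi(m_0+1)n\gamma_n d_n^\star$ slack using Lemma~\ref{lemma_bound} and A4. Your explicit handling of the entrywise (rather than operator-norm) concentration, paying an $\|D_L\|_{1,1}^2$ factor tamed by the $\ell_1$ penalty, is in fact a more careful rendering of the step the paper treats somewhat loosely in bounding $I_A$.
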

\begin{proof} 
Since $ m < m_0 $, there exists a point $ t_j $ such that $ | s_i - t_j | > \Delta_n/4 $. Now, $ L_n(s_1, s_2, ..., s_m; \eta_n) -  n \, \eta_n \sum_{i=1}^{m+1} || \widehat{\theta}_{(s_{i-1},s_i)} ||_1 =  || \mathcal{Y} - \mathcal{X}_{s_1, ..., s_m} \widehat{\theta}_{s_1, ..., s_m} ||_F^2  =  \sum_{i=1}^{m_0+2} T_i  $, where $ T_i $ is the sum of squares involving $ Y_k $, $ t_{i-1} \leq k < t_i $ for $ i = 1, ..., j-1, j+2, ..., m_0+1 $, and $ T_j, T_{j+1}, T_{m_0+2} $ are the sums of $ Y_k $ for $ t_{j-1} \leq k < t_j - \Delta_n/4 $, $ t_j + \Delta_n/4 \leq k < t_{j+1} $, and $ t_j - \Delta_n/4 \leq k < t_j + \Delta_n/4 $, respectively. For $ i = 1, ..., j-1, j+2, ..., m_0+1 $, we find a lower bound for the $ T_i $. For a fixed $ i $, let's say there are $ r_i $ points within $ [t_{i-1}, t_i) $, denoting them by $ \lbrace s_l, s_{l+1}, ..., s_{l+r_i} \rbrace \subset \lbrace s_1, ..., s_m \rbrace $, we put $ r_i = -1 $ if there are no points. Now, $ T_i $ can be decomposed as:

\begin{equation}
T_i = \sum_{t=t_{i-1}}^{s_l-1} || y_t - \widehat{\theta}_{(t_{i-1},s_l)} Y_{t-1}   ||_2^2 + \sum_{h=l}^{l+r_i-1} \sum_{t=s_h}^{s_{h+1}-1} || y_t - \widehat{\theta}_{(s_h,s_{h+1})} Y_{t-1}   ||_2^2 + \sum_{t=s_{l+r_i}}^{t_i-1} || y_t - \widehat{\theta}_{(s_{l+r_i},t_i)} Y_{t-1}   ||_2^2.
\end{equation}
Note that for a fixed $ h $, 
\begin{eqnarray}
\sum_{t=s_h}^{s_{h+1}-1} || y_t - \widehat{\theta}_{(s_h,s_{h+1})} Y_{t-1}   ||_2^2 &=& \sum_{t=s_h}^{s_{h+1}-1} || \varepsilon_t ||_2^2 + \sum_{t=s_h}^{s_{h+1}-1} || (\Phi^{(.,i)} -  \widehat{\theta}_{(s_h,s_{h+1})} )  Y_{t-1}  ||_2^2 \nonumber \\
& + & 2 \sum_{t=s_h}^{s_{h+1}-1} Y_{t-1}^\prime (\Phi^{(.,i)} -  \widehat{\theta}_{(s_h,s_{h+1})} )^\prime \varepsilon_t.
\end{eqnarray}
Now, by similar arguments as in lemma (\ref{lemma_bound}), we have:

\begin{eqnarray}\label{eq_error_rate}
\left| \sum_{t=s_h}^{s_{h+1}-1} Y_{t-1}^\prime (\Phi^{(.,i)} -  \widehat{\theta}_{(s_h,s_{h+1})} )^\prime \varepsilon_t \right| & \leq & \frac{|| \sum_{t=s_h}^{s_{h+1}-1} Y_{t-1} \varepsilon_t^\prime  ||_\infty}{n \gamma_n} n \gamma_n || \Phi^{(.,i)} -  \widehat{\theta}_{(s_h,s_{h+1})}  ||_1 \nonumber \\
&=& o_p (n \gamma_n) || \Phi^{(.,i)} -  \widehat{\theta}_{(s_h,s_{h+1})}  ||_1,
\end{eqnarray}
which adds up to:
\begin{equation}
T_i \geq \sum_{t = t_{i-1}}^{t_i-1} || \varepsilon_t ||_2^2 - o_p (n \gamma_n) \left( \sum_{h=l}^{l+r_i-1}  || \Phi^{(.,i)} -  \widehat{\theta}_{(s_h,s_{h+1})}  ||_1 +  || \Phi^{(.,i)} -  \widehat{\theta}_{(t_{i-1},s_l)}  ||_1 +  || \Phi^{(.,i)} -  \widehat{\theta}_{(s_{l+r_i},t_i)}  ||_1  \right).
\end{equation}

Let's focus on $ T_{m_0+2} $.  Since there are no points inside the interval $ [t_j - \Delta_n/4, t_j + \Delta_n/4) $, $ \widehat{\theta}_{(t_j - \Delta_n/4, t_j)} =  \widehat{\theta}_{(t_j,t_j + \Delta_n/4)} = \theta^\star $. Now, we can decompose it as:

\begin{equation}
T_{m_0+2} = \sum_{t=t_j - \Delta_n/4}^{t_j-1} || y_t - \theta^\star Y_{t-1}   ||_2^2 + \sum_{t=t_j }^{t_j+ \Delta_n/4} || y_t - \theta^\star Y_{t-1}   ||_2^2 = I + II.
\end{equation}
We zoom in $ I $:

\begin{eqnarray}
I &=&  \sum_{t=t_j - \Delta_n/4}^{t_j-1} || \varepsilon_t ||_2^2 + \sum_{t=t_j - \Delta_n/4}^{t_j-1} || (\Phi^{(.,j)} -  \theta^\star )  Y_{t-1}  ||_2^2 \nonumber \\
& + & 2 \sum_{t=t_j - \Delta_n/4}^{t_j-1} Y_{t-1}^\prime (\Phi^{(.,j)} -  \theta^\star )^\prime \varepsilon_t \nonumber \\
&=& \sum_{t=t_j - \Delta_n/4}^{t_j-1} || \varepsilon_t ||_2^2 + I_A + I_B.
\end{eqnarray}

Similar to (\ref{eq_error_rate}), we have:

\begin{eqnarray}
\left| I_B \right| & \leq & o_p (n \gamma_n) || \Phi^{(.,j)} -  \theta^\star  ||_1,
\end{eqnarray}

We need to find a large enough bound for $ I_A $. Denote the $ i$-th row of $ \Phi^{(.,j)} -  \theta^\star $ by $ v_i $ for $ i = 1, ..., p $. Now,

\begin{eqnarray}
I_A &=& \sum_{t=t_j - \Delta_n/4}^{t_j-1} Y_{t-1}^\prime (\Phi^{(.,j)} -  \theta^\star)^\prime  (\Phi^{(.,j)} -  \theta^\star) Y_{t-1} \nonumber \\
&=& \mbox{tr} \left(  (\Phi^{(.,j)} -  \theta^\star) \left( \sum_{t=t_j - \Delta_n/4}^{t_j-1} Y_{t-1} Y_{t-1}^\prime \right) (\Phi^{(.,j)} -  \theta^\star)^\prime  \right) \nonumber \\
&=& \sum_{i=1}^{p} v_i \left( \sum_{t=t_j - \Delta_n/4}^{t_j-1} Y_{t-1} Y_{t-1}^\prime \right) v_i^\prime \nonumber \\
&=& \sum_{i=1}^{p} v_i \left( \sum_{t=t_j - \Delta_n/4}^{t_j-1} \left( Y_{t-1} Y_{t-1}^\prime - \Gamma_j (0) \right) + \frac{\Delta_n}{4}   \Gamma_j (0) \right) v_i^\prime
\end{eqnarray}

By similar arguments as in lemma (\ref{lemma_bound}), we have:
$$  \frac{4}{\Delta_n}  \left|\left| \sum_{t=t_j - \Delta_n/4}^{t_j-1} \left( Y_{t-1} Y_{t-1}^\prime - \Gamma_j (0) \right)   \right| \right|_\infty = o_p(1). $$
Using the above fact, 
\begin{equation}
I_A \geq \frac{\Delta_n}{8} \Lambda_{\min} (\Gamma_j(0)) \sum_{i=1}^{p} || v_i ||_2^2 = c_1  || \Phi^{(.,j)} -  \theta^\star ||_2^2,
\end{equation}
with $ c_1 = \frac{1}{8} \min_{1 \leq j \leq m_0+1} \Lambda_{\min} (\Gamma_j(0)) $. All combined lead to:
\begin{equation}
I \geq \sum_{t=t_j - \Delta_n/4}^{t_j-1} || \varepsilon_t ||_2^2 + c_1  || \Phi^{(.,j)} -  \theta^\star ||_2^2 - o_p (n \gamma_n) || \Phi^{(.,j)} -  \theta^\star  ||_1.
\end{equation}
Similarly, one can show that:
\begin{equation}
II \geq \sum_{t=t_j }^{t_j+ \Delta_n/4} || \varepsilon_t ||_2^2 + c_1  || \Phi^{(.,j+1)} -  \theta^\star ||_2^2 - o_p (n \gamma_n) || \Phi^{(.,j+1)} -  \theta^\star  ||_1.
\end{equation}

Now, since $ \min_{1 \leq j \leq m_0} \left|\left| \Phi^{(.,j+1)} - \Phi^{(.,j)}  \right|\right|_2  \geq v > 0 $, we have:
\begin{equation}
T_{m_0+2} = I + II \geq \sum_{t=t_j - \Delta_n/4}^{t_j+ \Delta_n/4} || \varepsilon_t ||_2^2 + c \Delta_n - o_p (n \gamma_n) \left(   || \Phi^{(.,j)} -  \theta^\star  ||_1 + || \Phi^{(.,j+1)} -  \theta^\star  ||_1 \right),
\end{equation}
where $ c = c_1 \nu $. Now, since we don't know which true segments will be inside each estimated segment, we have the following lower bound:
\begin{equation}
L_n(s_1, s_2, ..., s_m; \eta_n) -  n \, \eta_n \sum_{i=1}^{m+1} || \widehat{\theta}_{(s_{i-1},s_i)} ||_1 =  \sum_{i=1}^{m_0+2} T_i  \geq \sum_{t=d}^{T} || \varepsilon_t ||_2^2  + c \Delta_n - o_p (n \gamma_n) \sum_{i=1}^{m+1} \sum_{j=1}^{m_0+1} || \Phi^{(.,j)} -  \widehat{\theta}_{(s_{i-1},s_i)} ||_1.
\end{equation}
 Now, by assumption A4 (a) or (b), we have:
\begin{eqnarray}
L_n(s_1, s_2, ..., s_m; \eta_n) & \geq & \sum_{t=d}^{T} || \varepsilon_t ||_2^2 + c \Delta_n - (m_0+1) n \gamma_n \sum_{j=1}^{m_0+1} || \Phi^{(.,j)} ||_1 \nonumber \\
& \geq & \sum_{t=d}^{T} || \varepsilon_t ||_2^2 + c \Delta_n - M_{\Phi} (m_0+1) n \gamma_n d_n^\star,
\end{eqnarray}
with high probability approaching to 1. Note that the lower bound doesn't depend on the choices of $ s_i$'s as long as $ m < m_0 $. This completes the proof. 
\end{proof}

%%%%%%%%%%%%%%%%%%%
\subsection*{Appendix~B: Proof of Main Results}

\begin{proof}[Proof of Theorem~\ref{thm_pred_error}]
By definition of $ \widehat{\Theta} $, we get
\begin{equation}
\frac{1}{n} || Y - Z \widehat{\Theta}  ||_2^2 + \lambda_n \sum_{i=1}^{n} || \widehat{\theta}_i ||_1 \leq \frac{1}{n} || Y - Z \Theta ||_2^2 + \lambda_n \sum_{i=1}^{n} || \theta_i ||_1.
\end{equation}
Denoting $ \mathcal{A} = \lbrace t_1, t_1, ..., t_{m_0}  \rbrace $, we have:
\begin{eqnarray}
\frac{1}{n} \left|\left| Z \left( \widehat{\Theta} - \Theta  \right)  \right|\right|_2^2 &\leq & \frac{2}{n} \left(  \widehat{\Theta} - \Theta \right)^\prime Z^\prime E + \lambda_n \sum_{i=1}^{n} || \widehat{\theta}_i ||_1 - \lambda_n \sum_{i=1}^{n} || \theta_i ||_1 \nonumber \\ & \leq & 2 \left|\left|  \frac{Z^\prime E }{n}  \right|\right|_\infty \sum_{i=1}^{n}  || \theta_i - \widehat{\theta}_i ||_1 +\lambda_n \sum_{i \in \mathcal{A}} \left( || \theta_i ||_1 -  || \widehat{\theta}_i ||_1  \right) - \lambda_n \sum_{i \in \mathcal{A}^c} || \widehat{\theta}_i ||_1 \nonumber \\ 
& = & \lambda_n \sum_{i \in \mathcal{A}}  || \theta_i - \widehat{\theta}_i ||_1 + \lambda_n \sum_{i \in \mathcal{A}} \left( || \theta_i ||_1 -  || \widehat{\theta}_i ||_1  \right) \nonumber \\
& \leq & 2 \lambda_n \sum_{i \in \mathcal{A}} || \theta_i ||_1 \nonumber \\
& \leq & 2 \lambda_n m_n  \max_{1 \leq j \leq m_0+1} \left|\left| \Phi^{(.,j)} - \Phi^{(.,j-1)} \right|\right|_1 \nonumber \\
& = & 4 C m_n  \max_{1 \leq j \leq m_0+1} \left\lbrace \sum_{i=1}^{p} \left( s_{ij} + s_{i(j-1)}  \right) \right\rbrace   M_\Phi \sqrt{\frac{\log(n) + 2\log(p) + \log(d)}{n}},
\end{eqnarray}
with high probability approaching to 1 due to the lemma (\ref{lemma_first}). 
\end{proof}

\begin{proof}[Proof of Theorem~\ref{thm_known_m}]

The proof is similar to theorem 2.2 in (\cite{Chan_2014}) and proposition 5 in (\cite{Harchaoui_2010}). Before we start, define for a matrix $ A \in \mathbb{R}^{pd \times p} $, $ || A ||_{\infty, NZ} = \max_{j \in NZ, 1 \leq i \leq p} |a_{ji}| $. Now, if for some $ i = 1, ..., m_0 $, $ | \widehat{t}_i - t_i | > n \gamma_n $, this means that there exists a true break point $ t_{i_0+1} $ which is isolated from all the estimated points, i.e. $ \min_{1 \leq i \leq m_0} | \widehat{t}_i - t_{i_0+1} | > n \gamma_n  $. In other words, there exists an estimated break point $ \widehat{t}_j $ such that, $ t_{i_0+1} - t_{i_0} \vee \widehat{t}_j \geq n \gamma_n $ and $ t_{i_0+2} \wedge \widehat{t}_{j+1} \geq n \gamma_n $. Apply lemma (\ref{lemma_KKT}) twice to get:

\begin{equation}\label{eq:thm_known_first}
\left|\left| \sum_{l = t_{i_0} \vee \widehat{t}_j }^{t_{i_0+1} - 1} Y_{l-1} Y_{l-1}^\prime \left( {\Phi^\prime}^{(.,i_0+1)} - \widehat{\Phi^\prime}^{(.,j+1)}  \right) \right|\right|_{\infty,NZ} \leq n \lambda_n + \left|\left|  \sum_{l = t_{i_0} \vee \widehat{t}_j }^{t_{i_0+1} - 1} Y_{l-1} \varepsilon_l^\prime \right|\right|_\infty
\end{equation}

and 

\begin{equation}\label{eq:thm_known_second}
\left|\left| \sum_{l = t_{i_0+1} }^{t_{i_0+2} \wedge \widehat{t}_{j+1} - 1} Y_{l-1} Y_{l-1}^\prime \left( {\Phi^\prime}^{(.,i_0+2)} - \widehat{\Phi^\prime}^{(.,j+1)}  \right) \right|\right|_{\infty,NZ} \leq n \lambda_n + \left|\left|  \sum_{l = t_{i_0+1} }^{t_{i_0+2} \wedge \widehat{t}_{j+1} - 1} Y_{l-1} \varepsilon_l^\prime \right|\right|_\infty.
\end{equation}

Now, consider the first equation (\ref{eq:thm_known_first}). We can write the left hand side as 

\begin{eqnarray}
(t_{i_0+1} - t_{i_0} \vee \widehat{t}_j)^{-1} \left|\left| \sum_{l = t_{i_0} \vee \widehat{t}_j }^{t_{i_0+1} - 1} Y_{l-1} Y_{l-1}^\prime \left( {\Phi^\prime}^{(.,i_0+1)} - \widehat{\Phi^\prime}^{(.,j+1)}  \right) \right|\right|_{\infty,NZ} &\geq & \left|\left| (\Gamma_i^d (0) - A) \left( {\Phi^\prime}^{(.,i_0+1)}  \right) \right|\right|_{\infty,NZ} \nonumber \\
& - & \left|\left| (\Gamma_i^d (0) - A) \left(  \widehat{\Phi^\prime}^{(.,j+1)}  \right) \right|\right|_{\infty},
\end{eqnarray}

for some random matrix $ A $ with $ ||A||_\infty \rightarrow 0 $ with high probability converging to one based on lemma (\ref{lemma_bound}). Then, we can show that based on the properties of the covariance matrix $ \Gamma_i^d (0) $ that:

\begin{equation}
\left|\left| (\Gamma_i^d (0) - A) \left( {\Phi^\prime}^{(.,i_0+1)}  \right) \right|\right|_{\infty,NZ} \geq c_1 (s^\star)^{-1} \max_{1 \leq i \leq p}  \left|\left| \Phi_i^{(.,i_0+1)} \right|\right|_2,
\end{equation} 
and 
\begin{equation}
\left|\left| (\Gamma_i^d (0) - A) \left(  \widehat{\Phi^\prime}^{(.,j+1)}  \right) \right|\right|_{\infty} \leq c_2 \left|\left|  \widehat{\Phi^\prime}^{(.,j+1)} \right|\right|_1,
\end{equation}
for some positive constants $ c_1, c_2 $. 
Putting them all together, and use lemma (\ref{lemma_bound}) again for for the second term on the right hand side of equation (\ref{eq:thm_known_first}), we have:

\begin{equation}
c_1  \max_{1 \leq i \leq p}  \left|\left| \Phi_i^{(.,i_0+1)} \right|\right|_2 - c_2 s^\star  \left|\left|  \widehat{\Phi^\prime}^{(.,j+1)} \right|\right|_1 \leq \frac{s^\star n \lambda_n}{(t_{i_0+1} - t_{i_0} \vee \widehat{t}_j)} + k_1 s^\star \sqrt{\frac{\log p}{n \gamma_n}}.
\end{equation}
 The right hand side goes to zero based on A2 and A3. Similarly, we can use equation (\ref{eq:thm_known_second}) to show that

\begin{equation}
c_3  \max_{1 \leq i \leq p}  \left|\left| \Phi_i^{(.,i_0+2)} \right|\right|_2 - c_4 s^\star  \left|\left|  \widehat{\Phi^\prime}^{(.,j+1)} \right|\right|_1 \leq \frac{s^\star n \lambda_n}{(t_{i_0+1} - t_{i_0} \vee \widehat{t}_j)} + k_1 s^\star \sqrt{\frac{\log p}{n \gamma_n}}.
\end{equation}

Putting them together implies that:

\begin{equation}
\frac{\max_{1 \leq i \leq p}  \left|\left| \Phi_i^{(.,i_0+2)} \right|\right|_2}{\max_{1 \leq i \leq p}  \left|\left| \Phi_i^{(.,i_0+1)} \right|\right|_2} \leq c_5,
\end{equation}

and so, if we choose the $ \nu^\prime $ large enough in A2, we reach the contradiction. This completes the proof.

\end{proof}

\begin{proof}[Proof of Theorem~\ref{thm_Hausdorff}]
The proof is similar to the proof of theorem 2.3 in \cite{Chan_2014}. Here we will mention the proof of the first part. For that, assume $ |\mathcal{A}_n | < m_0 $. This means there exist an isolated true break point, say $ t_{i_0} $. More specifically, there exists an estimated break point $ \widehat{t}_j $ such that, $ t_{i_0+1} - t_{i_0} \vee \widehat{t}_j \geq n \gamma_n / 3 $ and $ t_{i_0+2} \wedge \widehat{t}_{j+1} \geq n \gamma_n / 3 $. Apply lemma (\ref{lemma_KKT}) twice to get:

\begin{equation}
\left|\left| \sum_{l = t_{i_0} \vee \widehat{t}_j }^{t_{i_0+1} - 1} Y_{l-1} Y_{l-1}^\prime \left( {\Phi^\prime}^{(.,i_0+1)} - \widehat{\Phi^\prime}^{(.,j+1)}  \right) \right|\right|_{\infty,NZ} \leq n \lambda_n + \left|\left|  \sum_{l = t_{i_0} \vee \widehat{t}_j }^{t_{i_0+1} - 1} Y_{l-1} \varepsilon_l^\prime \right|\right|_\infty
\end{equation}

and 

\begin{equation}
\left|\left| \sum_{l = t_{i_0+1} }^{t_{i_0+2} \wedge \widehat{t}_{j+1} - 1} Y_{l-1} Y_{l-1}^\prime \left( {\Phi^\prime}^{(.,i_0+2)} - \widehat{\Phi^\prime}^{(.,j+1)}  \right) \right|\right|_{\infty,NZ} \leq n \lambda_n + \left|\left|  \sum_{l = t_{i_0+1} }^{t_{i_0+2} \wedge \widehat{t}_{j+1} - 1} Y_{l-1} \varepsilon_l^\prime \right|\right|_\infty.
\end{equation}
Now, similar argument as in theorem (\ref{thm_known_m}) reaches to contradiction, and this completes the proof.
\end{proof}

\begin{proof}[Proof of Theorem~\ref{thm_selection}]
Let's focus on the first part. We show that (a) $ \mathbb{P} ( \dhat{m} < m_0 ) \rightarrow 0 $, and (b) $ \mathbb{P} ( \dhat{m} > m_0 ) \rightarrow 0 $. For the first claim, from theorem (\ref{thm_Hausdorff}), we know that there are points $ \widehat{t}_i \in \mathcal{A}_n $ such that $ \max_{1 \leq i \leq m_0} | \widehat{t}_i - t_i  | \leq n \gamma_n $. The parameter estimated when choosing these $ m_0 $ points are $ \widehat{\theta}_{(\widehat{t}_1, ..., \widehat{t}_{m_0})} $. By the definition of this parameter, it minimizes the least squares plus the $ L_1 $ penalty on its norm. Therefore, it has to beat the case where one puts $ \Phi^{(.,j)} $ on the segment $ [\widehat{t}_{j-1}, \widehat{t}_{j})$ for $ j = 1, ..., m_0+1 $. This leads to an upper bound for $ L(\widehat{t}_1, ..., \widehat{t}_{m_0} ;\eta_n) $. By similar arguments as in lemma (\ref{lemma_selection}), we get that there exist constants $ K_1, K_2, K > 0 $ such that:

\begin{eqnarray}
L(\widehat{t}_1, ..., \widehat{t}_{m_0} ;\eta_n) & \leq & \sum_{t=d}^{T} || \varepsilon_t ||_2^2 + o_p(n \gamma_n) \sum_{j=1}^{m_0} \left|\left| \Phi^{(.,j+1)} - \Phi^{(.,j)}  \right| \right|_1 \nonumber \\
& + & K_1 n \gamma_n \sum_{j=1}^{m_0} \left|\left| \Phi^{(.,j+1)} - \Phi^{(.,j)}  \right| \right|_2^2 + K_2 n \gamma_n \sum_{j=1}^{m_0+1} \left|\left| \Phi^{(.,j+1)} \right| \right|_1 \nonumber \\
& \leq & \sum_{t=d}^{T} || \varepsilon_t ||_2^2 + K n \gamma_n d_n^\star.
\end{eqnarray}

Now,

\begin{eqnarray}
IC( \dhat{t}_1, ..., \dhat{t}_{\dhat{m}} ) & = & L_n (\dhat{t}_1, ..., \dhat{t}_{\dhat{m}}; \eta_n ) + \dhat{m} \omega_n \nonumber \\
& > & \sum_{t=d}^{T} || \varepsilon_t ||_2^2 + c \Delta_n - M_{\Phi} (m_0+1) n \gamma_n d_n^\star + \dhat{m} \omega \nonumber \\
& \geq & L(\widehat{t}_1, ..., \widehat{t}_{m_0} ;\eta_n) + m_0 \omega_n + c \Delta_n - K_3 (m_0+1) n \gamma_n d_n^\star - (m_0 -  \dhat{m}) \omega_n \nonumber \\
& \geq & L(\widehat{t}_1, ..., \widehat{t}_{m_0} ;\eta_n) + m_0 \omega_n,
\end{eqnarray}
since $ \lim_{n \rightarrow \infty} n \gamma_n d_n^\star / \omega_n \leq 1 $, and $ \lim_{n \rightarrow \infty} m_0 \omega_n / \Delta_n = 0 $. This proves part (a). To prove part (b), note that a similar argument as in lemma (\ref{lemma_selection}) shows that 
\begin{equation}
L_n ( \dhat{t}_1, ..., \dhat{t}_{\dhat{m}}  ; \eta_n) \geq \sum_{t=d}^{T} || \varepsilon_t ||_2^2 - K_4 m n \gamma_n d_n^\star,
\end{equation}
for some constant $ K_4 > 0 $. A comparison between $ IC (\dhat{t}_1, ..., \dhat{t}_{\dhat{m}}) $ and $ IC (\widehat{t}_1, ..., \widehat{t}_{m_0}) $ yields to:

\begin{eqnarray}
\sum_{t=d}^{T} || \varepsilon_t ||_2^2 - K_4 m n \gamma_n d_n^\star + m \omega_n & \leq & IC (\dhat{t}_1, ..., \dhat{t}_{\dhat{m}}) \nonumber \\
& \leq & IC (\widehat{t}_1, ..., \widehat{t}_{m_0}) \nonumber \\
& \leq & \sum_{t=d}^{T} || \varepsilon_t ||_2^2 + K n \gamma_n d_n^\star + m_0 \omega_n,
\end{eqnarray}
which means:
\begin{equation}
(m - m_0) \omega_n \leq K_4 m n \gamma_n d_n^\star + K n \gamma_n d_n^\star,
\end{equation}
which contradicts with the fact that $ m_0 n \gamma_n d_n^\star / \omega_n \rightarrow 0 $. This completes the first part of the theorem. 

For the second part, put $ B = 2 K / c $. Now, suppose that there exists a point $ t_i $ such that $ \min_{1 \leq j \leq m_0} | \dhat{t}_j - t_j | \geq B n \gamma_n d_n^\star $. Then, by similar argument as in lemma (\ref{lemma_selection}), we can show that:
\begin{eqnarray}
\sum_{t=d}^{T} || \varepsilon_t ||_2^2 + c B n \gamma_n d_n^\star & < & L_n ( \dhat{t}_1, ..., \dhat{t}_{m_0} ) \nonumber \\
& \leq & L_n ( \widehat{t}_1, ..., \widehat{t}_{m_0} ) \nonumber \\
& \leq & \sum_{t=d}^{T} || \varepsilon_t ||_2^2 + K n \gamma_n d_n^\star, 
\end{eqnarray} 
which contradicts with the way $ B $ was selected. This completes the proof of the whole theorem.
\end{proof}

%%%%%%%%%%%

\bibliography{Safikhani_Reference}
%\bibliogrpahystyle{plainnat.bst}
\end{document}